\newcommand{\RegVer}[1]{#1}
\newcommand{\SoCGVer}[1]{}
\newcommand{\RegVer}[1]{}%
\newcommand{\SoCGVer}[1]{#1}%
\providecommand{\BibLatexMode}[1]{}
\providecommand{\BibTexMode}[1]{#1}
  \renewcommand{\BibLatexMode}[1]{}
  \renewcommand{\BibTexMode}[1]{#1}
  \renewcommand{\BibLatexMode}[1]{#1}
  \renewcommand{\BibTexMode}[1]{}
\newcommand{\SarielComp}[1]{}
\newcommand{\IfPrinterVer}[2]{#2}%
\definecolor{blue25emph}{rgb}{0, 0, 11}
\DeclareFontFamily{U}{BOONDOX-calo}{\skewchar\font=45 }
\DeclareFontShape{U}{BOONDOX-calo}{m}{n}{
  <-> s*[1.05] BOONDOX-r-calo}{}
\DeclareFontShape{U}{BOONDOX-calo}{b}{n}{
  <-> s*[1.05] BOONDOX-b-calo}{}
\DeclareMathAlphabet{\mathcalb}{U}{BOONDOX-calo}{m}{n}
\SetMathAlphabet{\mathcalb}{bold}{U}{BOONDOX-calo}{b}{n}
\DeclareMathAlphabet{\mathbcalb}{U}{BOONDOX-calo}{b}{n}
\newcommand{\hrefb}[3][black]{\href{#2}{\color{#1}{#3}}}%
   \theoremstyle{remark}%
   \newtheorem{problem}[theorem]{Problem}%
   \newtheorem{defn}[theorem]{Definition}%
   \newtheorem{assumption}[theorem]{Assumption}%
   \newtheorem{observation}[theorem]{Observation}%
   \newtheorem{fact}[theorem]{Fact}%
   \newtheorem*{remark:unnumbered}{Remark}%
\theoremstyle{plain}%
\newtheorem{theorem}{Theorem}[section]
\newtheorem{lemma}[theorem]{Lemma}
\newtheorem{observation}[theorem]{Observation}
\theoremstyle{plain}%
\newtheorem*{remark:unnumbered}[theorem]{Remark}%
\newtheorem{remark}[theorem]{Remark}%
\newtheorem{definition}[theorem]{Definition}
\newcommand{\myqedsymbol}{\rule{2mm}{2mm}}
\theoremstyle{nonumberplain}%
\newtheorem{proof}{Proof:}%
\newcommand{\atgen}{\symbol{'100}}
\newcommand{\SarielThanks}[1]{\thanks{Department of Computer Science;
      University of Illinois; 201 N. Goodwin Avenue; Urbana, IL,
      61801, USA; {\tt sariel\atgen{}illinois.edu}; {\tt
         \url{http://sarielhp.org/}.} #1}}
\newcommand{\HLink}[2]{\hyperref[#2]{#1~\ref*{#2}}}
\newcommand{\HLinkSuffix}[3]{\hyperref[#2]{#1\ref*{#2}{#3}}}
\newcommand{\thmlab}[1]{{\label{theo:#1}}}
\newcommand{\thmref}[1]{\HLink{Theorem}{theo:#1}}
\newcommand{\itemlab}[1]{\label{item:#1}}
\newcommand{\itemref}[1]{\HLinkSuffix{}{item:#1}{}}
\newcommand{\apndlab}[1]{\label{apnd:#1}}
\newcommand{\apndref}[1]{\HLink{Appendix}{apnd:#1}}
\newcommand{\remlab}[1]{\label{rem:#1}}
\newcommand{\remref}[1]{\HLink{Remark}{rem:#1}}%
\newcommand{\seclab}[1]{\label{sec:#1}}
\newcommand{\secref}[1]{\HLink{Section}{sec:#1}}
\providecommand{\deflab}[1]{\label{def:#1}}
\newcommand{\defref}[1]{\HLink{Definition}{def:#1}}
\newcommand{\lemlab}[1]{\label{lemma:#1}}
\newcommand{\lemref}[1]{\HLink{Lemma}{lemma:#1}}%
\providecommand{\eqlab}[1]{}%
\renewcommand{\eqlab}[1]{\label{equation:#1}}
\newcommand{\Eqref}[1]{\HLinkSuffix{Eq.~(}{equation:#1}{)}}
\providecommand{\remove}[1]{}%
\newcommand{\Set}[2]{\left\{ #1 \;\middle\vert\; #2 \right\}}
\newcommand{\pth}[2][\!]{\mleft({#2}\mright)}%
\newcommand{\ceil}[1]{\left\lceil {#1} \right\rceil}
\newcommand{\brc}[1]{\left\{ {#1} \right\}}
\newcommand{\cardin}[1]{\left| {#1} \right|}%
\renewcommand{\th}{th\xspace}
\renewcommand{\Re}{\mathbb{R}}%
\newlist{compactenumN}{enumerate}{5}%
\setlist[compactenumN]{topsep=0pt,itemsep=-1ex,partopsep=1ex,parsep=1ex,%
   label=\arabic*.}%
\setlist[compactenumN,2]{label={\arabic{compactenumNi}.\arabic*.}}
\newlist{compactenumA}{enumerate}{5}%
\setlist[compactenumA]{topsep=0pt,itemsep=-1ex,partopsep=1ex,parsep=1ex,%
   label=(\Alph*)}%
\setlist[compactenumA,2]{label={(\Alph{compactenumAi}.\roman*)}}
\newlist{compactenuma}{enumerate}{5}%
\setlist[compactenuma]{topsep=0pt,itemsep=-1ex,partopsep=1ex,parsep=1ex,%
   label=(\alph*)}%
\newlist{compactenumI}{enumerate}{5}%
\setlist[compactenumI]{topsep=0pt,itemsep=-1ex,partopsep=1ex,parsep=1ex,%
   label=(\Roman*)}%
\newlist{compactenumi}{enumerate}{5}%
\setlist[compactenumi]{topsep=0pt,itemsep=-1ex,partopsep=1ex,parsep=1ex,%
   label=(\roman*)}%
\newlist{compactitem}{itemize}{5}%
\setlist[compactitem]{topsep=0pt,itemsep=-1ex,partopsep=1ex,parsep=1ex,%
   label=\bullet}%
\providecommand{\Mh}[1]{#1}%
\newcommand{\PS}{\Mh{P}}%
\newcommand{\PSA}{\Mh{Q}}%
\newcommand{\ca}{\Mh{c}}%
\newcommand{\cb}{\Mh{f}}%
\newcommand{\cc}{\Mh{g}}%
\newcommand{\pa}{\Mh{p}}%
\newcommand{\pb}{\Mh{q}}%
\newcommand{\pc}{\Mh{t}}%
\newcommand{\clusterY}[2]{\mathsf{cl}\pth{#1,#2}}%
\newcommand{\dSY}[2]{\Mh{\mathsf{d}}\pth{#1,#2}}%
\newcommand{\dY}[2]{\left\|#1 - #2 \right\|}%
\newcommand{\mrgY}[2]{\Mh{\nabla}\pth{#1, #2}}
\newcommand{\mrgC}{\Mh{\nabla}}%
\newcommand{\ts}{\hspace{0.6pt}} %
\newcommand{\profitC}{\Mh{\rho}}%
\newcommand{\popt}{\Mh{\rho}^{}_{\Opt}}%
\newcommand{\profitX}[1]{\profitC\pth{ #1}}%
\newcommand{\profitY}[2]{\profitC\pth{ #1, #2}}%
\newcommand{\optY}[2]{\mathrm{opt}^{}_{#1}\pth{ #2}}%
\newcommand{\CS}{\Mh{C}}%
\newcommand{\CSA}{\Mh{D}}%
\newcommand{\Copt}{\Mh{C^*}}%
\newcommand{\copt}{\Mh{c^*}}%
\newcommand{\SaveContent}[2]{%
   \expandafter\newcommand{#1}{#2}%
}
\newcommand{\RestatementOf}[2]{
   \noindent%
   \textbf{Restatement of #1.}
   {\em #2{}}%
}
\numberwithin{figure}{section}%
\numberwithin{table}{section}%
\numberwithin{equation}{section}%
\newcommand{\eps}{{\varepsilon}}%
\newcommand{\ArrX}[1]{\mathcal{A}\pth{#1}}%
\newcommand{\ball}{\Mh{\mathbb{b}}}%
\newcommand{\SepSet}{\Mh{Z}}%
\newcommand{\VorX}[1]{\Mh{\mathcal{V}}\pth{#1}}%
\newcommand{\VorCell}[2]{\Mh{\mathcal{C}}_{#2} \pth{#1}}%
\newcommand{\distChar}{\mathsf{d}}%
\newcommand{\distSetY}[2]{\distChar\pth{#1, #2}}
\newcommand{\BSet}{\Mh{\partial}}%
\newcommand{\Div}{\Mh{\mathcal{D}}}%
\newcommand{\inSet}{\Mh{I}}%
\newcommand{\GS}{\Mh{G}}%
\newcommand{\batch}{\Mh{B}}%
\newcommand{\CC}[1]{\textcolor{blue}{\texttt{#1}}}
\newcommand{\bSize}{\Mh{\alpha}}%
\newcommand{\constA}{\Mh{\mathsf{c}_1}}%
\newcommand{\constB}{\Mh{\mathsf{c}_2}}%
\newcommand{\constD}{\Mh{\mathsf{c}_4}}%
\newcommand{\xs}{\Mh{\xi}}%
\newcommand{\balance}{\Mh{\Delta}}%
\newcommand{\lSize}{\Mh{\mathcalb{l}}}%
\newcommand{\oSize}{\Mh{\mathcalb{o}}}%
\newcommand{\constS}{\Mh{\gamma}}
\newcommand{\Lopt}{\Mh{\mathsf{L}}}%
\newcommand{\LMX}[1]{\Mh{\overline{\mathsf{L}}_{#1}}}%
\newcommand{\LM}{\Mh{\overline{\mathsf{L}}}}%
\newcommand{\Opt}{\Mh{\mathsf{O}}}%
\newcommand{\OMX}[1]{\Mh{\overline{\mathsf{O}}_{#1}}}%
\newcommand{\OM}{\Mh{\overline{\mathsf{O}}}}%
\newcommand{\US}{\Mh{\mathcal{U}}}%
\newcommand{\etal}{\textit{et~al.}\xspace}
\newcommand{\Term}[1]{\textsf{#1}}
\newcommand{\PTAS}{\Term{PTAS}\xspace}%
\newcommand{\sFunc}{\Mh{\mathsf{\varphi}}}%
\newcommand{\ub}{\Mh{\mathsf{u}}}%
\newcommand{\nnY}[2]{\mathsf{n{}n}\pth{#1,#2}}%
\newcommand{\Family}{\Mh{\mathcal{F}}}%
\newcommand*{\biga}[1]{\vcenter{\hbox{\scalebox{1.5}{\ensuremath#1}}}}
\newcommand{\si}[1]{#1}
\title{Submodular Clustering in Low Dimensions}
\date{\today}%
\author{%
   Arturs Backurs\thanks{Toyota Technological Institute at Chicago;
      {\tt backurs\atgen{}ttic.e{d}u}.}%
   \and%
   Sariel Har-Peled\SarielThanks{Work on this paper was partially
      supported by a NSF AF awards %
      CCF-1421231, %
      and %
      CCF-1907400.  %
   }}%
\begin{document}

\maketitle

\begin{abstract}
    We study a clustering problem where the goal is to maximize the
    coverage of the input points by $k$ chosen centers. Specifically,
    given a set of $n$ points $\PS \subseteq \Re^d$, the goal is to
    pick $k$ centers $\CS \subseteq \Re^d$ that maximize the service
    \begin{math}
        \sum_{\pa \in \PS}\sFunc\bigl( \dSY{\pa}{\CS} \bigr)
    \end{math}
    to the points $\PS$, where $\dSY{\pa}{\CS}$ is the distance of
    $\pa$ to its nearest center in $\CS$, and $\sFunc$ is a
    non-increasing service function $\sFunc : \Re^+ \to \Re^+$. This
    includes problems of placing $k$ base stations as to maximize the
    total bandwidth to the clients -- indeed, the closer the client
    is to its nearest base station, the more data it can send/receive,
    and the target is to place $k$ base stations so that the total
    bandwidth is maximized. We provide an $n^{\eps^{-O(d)}}$ time
    algorithm for this problem that achieves a $(1-\eps)$-approximation.
	Notably, the runtime does not depend on the parameter $k$ and it works for an arbitrary non-increasing service function $\sFunc : \Re^+ \to \Re^+$.

    \remove{We show that our algorithms extends to other metrics. For
       instance, if the underlying metric is a planar graph equipped
       with the shortest path distance, we show an $n^{\eps^{-O(1)}}$
       time algorithm.}
\end{abstract}

\section{Introduction}

Clustering is a fundamental problem, used almost everywhere in
computing. It involves partitioning the data into groups of similar
objects -- and, under various disguises, it is the fundamental problem
underlying most machine learning applications. The (theoretically)
well studied variants include $k$-median, $k$-means and $k$-center
clustering.  But many other variants of the clustering problem have
been subject of a long line of research \cite{dhs-pc-01}.

A clustering problem is often formalized as a constrained minimization
problem of a cost function. The cost function captures the similarity
of the objects in the same cluster. By minimizing the cost function we
obtain a clustering of the data such that objects in the same cluster
are more similar (in some sense) to each other than to those in other
clusters. Many of this type of formalizations of clustering are both
computationally hard, and sensitive to noise -- often because the
number of clusters is a hard constraint.

\paragraph*{Clustering as a quality of service maximization.}
An alternative formalization of the clustering problem is as a
\emph{maximization} problem where the goal is to maximize the quality
of ``service'' the data gets from the facilities chosen.  As a
concrete example, consider a set of $n$ clients, and the problem is
building $k$ facilities. The quality of service a client gets is some
monotonically decreasing non-negative function of its distance to the
closest facility. As a concrete example, for a mobile client, this
quantity might be the bandwidth available to the client. We refer to
this problem as the \emph{$k$-service} problem.

Such a formalization of clustering has several advantages.  The first
is diminishing returns (a.k.a.\ submodularity) -- that is, the marginal
value of a facility decreases as one adds more facilities. This
readily leads to an easy constant approximation algorithm. %
A second
significant advantage is the insensitivity to outliers -- a few points
being far away from the chosen facilities are going to change the
target function by an insignificant amount (naturally, these outliers
would get little to no service).

\paragraph*{Formal problem statement: $k$-service.}
Given a set $\PS \subseteq \Re^d$ of $n$ points, a monotonically
decreasing function $\sFunc:\Re^+ \rightarrow \Re^+$, the goal is to
choose a set $\CS$ of $k$ centers (not necessarily among the $n$ given
points), that maximize
$\sum_{\pa \in \PS} \sFunc\bigl( \dSY{\pa}{\CS} \bigr)$, where
$\dSY{\pa}{\CS} = \min_{\ca \in \CS} \dY{\pa}{\ca}$.

\paragraph*{Our result.}
We obtain an $n^{\eps^{-O(d)}}$ time algorithm for this problem that
achieves $(1-\eps)$-\si{approxi}\-\si{mation} for points in $\Re^d$.

\paragraph*{Related work.}

Maximum coverage problems, such as partial coverage by disks, were
studied in the past \cite{jlwzz-nltas-18}. These problems can be
interpreted as a $k$-service problem, where the function is $1$ within
distance $r$ from a facility, and zero otherwise.  In particular,
Chaplick \etal \cite{cdrs-asgcp-18} showed a \PTAS for covering a
maximum number of points, out of a given set of disks in the
plane. Our result implies a similar result in higher dimensions,
except that we consider the continuous case (i.e., our results yields
a \PTAS for covering the maximum number of points using $k$ unit disks).
Cohen-Addad \etal \cite{ckm-lsyas-19} showed that local search leads
to a \PTAS for $k$-median and $k$-means clustering in low dimensions
(and also in minor-free graphs). In~\cite{cohen2018fast} it was shown that the local search for $k$-means can be made faster achieving the runtime of $n \cdot k \cdot (\log n)^{(d/\eps)^{O(d)}}$. In~\cite{cohen2018near} near-linear time approximation schemes were obtained for several clustering problems improving on an earlier work (in particular,~\cite{friggstad2019local}). The authors achieve the runtime of $2^{(1/\eps)^{O(d^2)}}n(\log n)^{O(1)}$.
The techniques from the above works do not seem to be able to give near-linear time solution for the $k$-service problem, unfortunately. For instance, consider the special case of the $k$-service problem with $k=1$ and where the service function is $1$ within distance $r$ from a facility, and zero otherwise (the maximum coverage problem as above). Even for this very special case of the problem there is no algorithm known running in time $n^{o(d)}$ where $d$ is the dimension of the underlying space. The special case of $k=1$ is a significant obstacle towards obtaining near-linear time algorithms for the $k$-service problem.

Another related line of work is on the kernel density estimation
problem where a set $P$ of $n$ points is given and the goal is to
preprocess $P$ such that for an arbitrary query point $c$ one can
efficiently approximate
$\sum_{\pa \in \PS}\sFunc\bigl( \dSY{\pa}{c} \bigr)$. The goal is to
answer such queries much faster than in $O(nd)$ time, which is just
the linear scan. For various service functions $\sFunc$ and distance
functions $\mathsf{d}$ significantly faster algorithms are
known~\cite{greengard1991fast, charikar2017hashing,
   backurs2018efficient}. Despite the similarity, however, finding a
point (center) $c$ that (approximately) maximizes the sum
$\sum_{\pa \in \PS}\sFunc\bigl( \dSY{\pa}{c} \bigr)$ seems to be a
much harder problem~\cite{georgescu2003mean}.\footnote{In particular, to find such a point,~\cite{georgescu2003mean} use a heuristic that iteratively computes the gradient (mean shift vector) to obtain a sequence of points that converge to a local maxima (mode) of the density.}  Our work can be seen as
a generalization of the latter problem where our goal is to pick $k$
centers instead of one center.

In a another work, Friggstad \etal \cite{friggstad2019approximation} addressed the clustering problem in the setting with outliers.

\paragraph*{Balanced divisions.}
One of the building blocks we need is balanced divisions for Voronoi
diagrams.  This is present in the recent paper of Cohen-Addad \etal
\cite{ckm-lsyas-19}. The idea of balanced divisions seems to go back
to the work of Cohen-Addad and Mathieu \cite{cm-elsgo-15}.  Chaplick
\etal \cite{cdrs-asgcp-18} also prove a similar statement for planar
graphs.

For the sake of completeness, we include the proof of the desired
balanced divisions we need in \apndref{balanced}.  Both here and
\cite{ckm-lsyas-19} uses the Voronoi separator of Bhattiprolu and
Har-Peled \cite{bh-svdls-16} as the starting point to construct the
desired divisions. The Voronoi divisions we construct here are
slightly stronger than the one present in \cite{ckm-lsyas-19} -- all
the batches in the division are approximately of the same size, and
each one has a small separator from the rest of the point set.

\paragraph*{Clustering and submodularity.}
Work using submodularity in clustering includes Nagano \etal
\cite{nki-macc-10} and Wei \etal \cite{wiwbb-mrasp-15}.  Nagano \etal
\cite{nki-macc-10} considers the problem of computing the multi-way
cut, that minimizes the average cost (i.e., number of edges in the cut
divided by the number of clusters in the cut). Wei \etal
\cite{wiwbb-mrasp-15} also studies such partitions with average cost
target function. These works do not have any direct connection to what
is presented here, beyond the usage of submodularity.

\paragraph*{Paper organization.}
We define the problem formally in \secref{prelim}, and review some
necessary tools including submodularity and balanced subdivisions.
\secref{good:exchange} describes how to find a good exchange for the
current solution.  We describe the local search algorithm in
\secref{local:search}. The main challenge is to prove that if the
local search did not reach a good approximation, then there must be a
good exchange that improves the solution -- this is proved in
\secref{correctness}.

\section{Preliminaries}
\seclab{prelim}%

\paragraph*{Notations.}

In the following, we use $X+ x$ and $X-x$ as a shorthand for
$X \cup \brc{x}$ and $X \setminus \brc{x}$, respectively.

Given a point $\pa \in\Re^d$, and a set $\CSA \subseteq \Re^d$, we
denote by $\dSY{\pa}{\CSA} = \min_{\cb \in \CSA} \dY{\pa}{\cb}$ the
\emph{distance} of $\pa$ from $\CSA$. A point in $\CSA$ realizing this
distance is the \emph{nearest-neighbor} to $\pa$ in $\CSA$, and is
denoted by $\nnY{\pa}{\CSA} = \arg \min_{\ca \in \CSA} \dY{\ca}{\pa}$.

\subsection{Service function and problem statement}

A service function is a monotonically non-increasing function
$\sFunc: \Re^+ \rightarrow \Re^+$. In the following, %
given $x \geq 0$, assume that one can compute, in constant time, both
$\sFunc(x)$ and $\sFunc^{-1}(x)$. Given a point $\pa \in \Re^d$, and a
center $\ca \in \Re^d$, the quality of service that $\ca$ provides
$\pa$ is $ \profitY{\ca}{\pa} = \sFunc\bigl( \dY{\pa}{\ca}
\bigr)$. For a set of centers $\CS$, the quality of service it
provides to $\pa$ is
\begin{equation*}
    \profitY{\CS}{\pa}%
    =%
    \max_{\ca \in \CS} \profitY{\ca}{\pa}%
    =%
    \profitY{\nnY{\pa}{\CS}}{\pa}.
\end{equation*}
The service to $\PS$ provided by the set of centers $\CS$, or
just profit, is
\begin{equation*}
    \profitX{\CS}%
    =%
    \profitY{\CS}{\PS}%
    =%
    \sum_{\pa \in \PS} \profitY{\CS}{\pa }.
\end{equation*}

In the $k$-service problem, the task is to compute the set
$\Copt$ of $k$ points that realizes
\begin{equation*}
    \optY{k}{\PS}%
    =%
    \max_{\CS \subseteq \Re^d, \cardin{\CS} = k} 
    \profitY{\CS}{\PS}.%
\end{equation*}

\subsection{Submodularity}

The above is a submodular optimization problem. Indeed, consider a
center point $\ca$, and a set of centers $\CS$. The cell of
$\ca$, in the Voronoi partition induced by $\CS$, is
\begin{equation*}
    \clusterY{\ca}{\CS}%
    =%
    \Set{ \pa \in \PS}{\!\bigl. \dY{\ca}{\pa} < \dSY{\pa}{\CS - \ca}}.
\end{equation*}

\begin{definition}
    The marginal value of $\ca$ is
    \begin{equation*}
        \mrgY{\ca}{\CS}%
        =%
        \profitY{\CS + \ca }{\PS} - \profitY{\CS - \ca  }{\PS} 
        =%
        \sum_{\pa \in \clusterY{\ca}{\CS+\ca}}
        \pth{ \profitY{\CS+\ca}{\pa} - \profitY{\CS-\ca}{\pa} }.
    \end{equation*}
    In words, this is the increase in the service that one gets from
    adding the center $\ca$.
\end{definition}

For two sets of centers $\CSA \subseteq \CS$, and a center $\ca$,
observe that
$\clusterY{\ca}{\CS+\ca} \subseteq \clusterY{\ca}{\CSA+\ca}$. In
particular, we have
\begin{align*}
  \mrgY{\ca}{\CSA}%
  &=%
    \sum_{\pa \in \clusterY{\ca}{\CSA+\ca}}
    \pth{ \profitY{\ca}{\pa} - \profitY{\CSA }{\pa} }
    \geq%
    \sum_{\pa \in \clusterY{\ca}{\CS+\ca}}
    \pth{ \profitY{\ca}{\pa} - \profitY{\CSA}{\pa} }
  \\&%
  \geq%
  \sum_{\pa \in \clusterY{\ca}{\CS+\ca}}
  \pth{ \profitY{\ca}{\pa} - \profitY{\CS}{\pa} }
  =%
  \mrgY{\ca}{\CS}.%
\end{align*}
This property is known as submodularity.

\subsection{Balanced divisions}

For a point set $\PS \subseteq \Re^d$, the Voronoi diagram of
$\PS$, denoted by $\VorX{\PS}$ is the partition of space into convex
cells, where the Voronoi cell of $\pa \in \PS$ is
\begin{equation*}
    \VorCell{\pa}{\PS}%
    =%
    \Set{\pb \in \Re^d}{\dY{\pb}{\pa} \leq \distSetY{\pb}{\PS - \pa}},
\end{equation*}%
where $\distSetY{\pb}{\PS} = \min_{\pc \in \PS} \dY{\pb}{\pc}$ is the
distance of $\pb$ to the set $\PS$, see \cite{bcko-cgaa-08} for more
details on Voronoi diagrams.

\begin{definition}
    \deflab{def2}%
    Let $\PS$ be a set of points in $\Re^d$, and $\PS_1$ and $\PS_2$
    be two disjoint subsets of $\PS$. The sets $\PS_1$ and $\PS_2$ are
    Voronoi separated in $\PS$ if for all $\pa_1 \in \PS_1$ and
    $\pa_2 \in \PS_2$, we have that their Voronoi cells are disjoint
    -- that is,
    $\VorCell{\pa_1}{\PS} \cap \VorCell{\pa_2}{\PS} = \emptyset$. That
    is, the Voronoi cells of the pointsets are non-adjacent.
\end{definition}

\begin{definition}
    Given a set $\PS$ of $n$ points in $\Re^d$, a set of pairs
    $\brc{ (\batch_1, \BSet_1), \ldots, (\batch_m,\BSet_m)}$ is a
    Voronoi $\bSize$-division of $\PS$, if for all $i$, we
    have \smallskip%
    \begin{compactenumi}
        \item $\batch_1, \ldots, \batch_m$ are disjoint,
        \item $\bigcup_j \batch_j = \PS$,
        \item pointset $\BSet_i$ Voronoi separates $\batch_i$ from
        $\PS \setminus \batch_i$ in the Voronoi diagram of
        $\PS \cup \BSet_i$ in the sense of \defref{def2}, and
        \item $\cardin{\batch_i} \leq \bSize$.
    \end{compactenumi}
    \smallskip%
    The set $\batch_i$ is the $i$\th batch, and $\BSet_i$ is
    its boundary.
\end{definition}

A balanced coloring is a coloring
$\chi: \PS \rightarrow \{-1,+1\}$ of $\PS$ by $\pm 1$, such that
$\chi(\PS) = \sum_{\pa \in \PS} \chi(\pa) = 0$. For a set
$X \subseteq \PS$, its discrepancy is $\cardin{ \chi(X)}$.  We
need the following balanced $\bSize$-division result.  Since this
result is slightly stronger than what is available in the literature,
and is not stated explicitly in the form we need it, we provide a
proof for the sake of completeness in \apndref{balanced}.

\SaveContent{\ThmDivBalancedStatement}%
{%
   Given a set $\PS$ of $n$ points in $\Re^d$, parameters
   $\delta \in (0,1)$ and $\bSize = \Omega(1/\delta^{d+1})$, and a
   balanced coloring $\chi$ of $\PS$, one can compute in polynomial
   time, a Voronoi $\bSize$-division
   $\Div = \brc{ (\batch_1, \BSet_1), \ldots, (\batch_m,\BSet_m)}$,
   such that the following holds: \smallskip%
   \begin{compactenumA}
       \item $\bigcup \batch_i = \PS$, and the batches
       $\batch_1, \ldots, \batch_m$ are disjoint.
       
       \item $m = O( n/\bSize)$.
       
       \item For all $i$, we have the following properties:
       \begin{compactenumA}
           \item the set $\BSet_i$~Voronoi separates $\batch_i$ from
           $\PS \setminus \batch_i$.
           
           \item %
           $(1-\delta)\bSize \leq \cardin{\batch_i} \leq \bSize$
           (except for the last batch, which might be of size at least
           $(1-\delta)\bSize$, and at most size $2\bSize$).

           \item $\cardin{\BSet_i} \leq \delta \cardin{\batch_i}$.
           
           \item $|\chi(\batch_i)| \leq \delta \cardin{\batch_i}$.
       \end{compactenumA}
       
   \end{compactenumA}%
}%

\begin{theorem}
    \thmlab{div:balanced}%
    \ThmDivBalancedStatement{}
\end{theorem}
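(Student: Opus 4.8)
\medskip
\noindent
\textbf{Proof plan.}
The engine is the Voronoi separator of Bhattiprolu and Har-Peled~\cite{bh-svdls-16}. For a finite set $\PSA \subseteq \Re^d$, possibly carrying nonnegative weights, it produces a sphere $S$ such that, calling a point of $\PSA$ \emph{inner}, \emph{outer}, or \emph{crossing} according to whether its cell in $\VorX{\PSA}$ lies strictly inside $S$, strictly outside $S$, or meets $S$, the inner and outer weights are each at most $\tfrac{d+1}{d+2}$ of the total weight, while the number of crossing points is $O\pth{\cardin{\PSA}^{1-1/d}}$. I would first record this together with two routine add-ons. First, a color-aware version: add $\chi$ as a second (signed) weight and, by a standard sweeping argument over the family of candidate spheres, pick the separator that also nearly halves $\sum\chi$ on the two sides, so that in addition the discrepancies of the inner and outer sets are each at most $\tfrac{1}{2}\cardin{\chi(\PSA)} + O\pth{\cardin{\PSA}^{1-1/d}}$. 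Second, a sphere $S$ is turned into an honest separating \emph{set} of points by sprinkling $O\pth{\cardin{\PSA}^{1-1/d}}$ new \emph{Steiner} points on $S$, dense enough that every point of $S$ is closer to a Steiner point than to any point of $\PSA$; then in $\VorX{\PSA \cup (\text{Steiner points})}$ the Steiner cells cover $S$, so (Voronoi cells being convex) no inner cell reaches $S$ and no outer cell reaches $S$, hence the Steiner set Voronoi separates the inner points from the outer points, with each crossing point attached to whichever side physically contains it.

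\textbf{The decomposition tree.}
Build a binary recursion tree rooted at $\PS$. A node $v$ owns a region $R_v$ (a finite intersection of closed balls and ball complements cut out by its ancestors' separators), its points $\PS_v = \PS \cap R_v$, and the Steiner points $\BSet_v$ the ancestors placed on $\partial R_v$. If $\cardin{\PS_v} \le \Theta(\delta\bSize)$, stop; otherwise run the color-aware separator on $\PS_v \cup \BSet_v$, giving each real point weight $1$ (both for the count and for $\chi$) and each Steiner point weight $0$, to get a sphere $S_v$; split $R_v$ along $S_v$, assign each real point to the side it lies on, and by the second add-on add $O\pth{\cardin{\PS_v}^{1-1/d}}$ fresh Steiner points on the relevant portion of $S_v$ to each child's Steiner set. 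The leaves are \emph{atoms} of size at most $\Theta(\delta\bSize)$. Every split is $\tfrac{d+1}{d+2}$-balanced up to the lower order $O\pth{\cardin{\PS_v}^{1-1/d}}$ slack from re-routing crossing points, so $\cardin{\PS_v}$ (and, after bisection, $\cardin{\chi(\PS_v)}$) decays geometrically along every root-to-leaf path; thus the tree has depth $O(\log n)$ and $O(n/(\delta\bSize))$ atoms.

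\textbf{Assembling the batches.}
To hit the size window $(1-\delta)\bSize \le \cardin{\batch_i} \le \bSize$, glue atoms along the tree in the usual $r$-division manner — sweep the tree and, whenever the accumulated count of an untaken subtree first enters the window, carve it off as a batch, otherwise descend and continue. Since each atom has at most $\delta\bSize$ points the accumulated count cannot overshoot $\bSize$, and a charging argument gives $m = O(n/\bSize)$ batches; this yields (A), (B), and the size window. The boundary $\BSet_i$ of a batch is the union of the Steiner nets on the ancestor spheres that actually border $\batch_i$'s region. Because inside each region $R_v$ only cells of $\PS_v \cup \BSet_v$ appear (the ancestors' Steiner shields having sealed off everything outside), the part of any such net bordering a region of size $\cardin{\batch_i}$ has only $O\pth{\cardin{\batch_i}^{1-1/d}}$ points, giving $\cardin{\BSet_i} = O\pth{\cardin{\batch_i}^{1-1/d}} \le \delta\cardin{\batch_i}$ once $\bSize = \Omega(1/\delta^{d})$; the extra $\Theta(\log(1/\delta))$ levels of splitting needed to land in the window, and the Steiner points they contribute, are what push the hypothesis to $\bSize = \Omega(1/\delta^{d+1})$. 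The separation property is the second add-on applied to all bordering spheres simultaneously, and the discrepancy bound follows since bisection keeps $\cardin{\chi(\batch_i)} = O\pth{\cardin{\batch_i}^{1-1/d}} \le \delta\cardin{\batch_i}$.

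\textbf{The hard part.}
The delicate point is getting the \emph{per-batch} bounds — the boundary bound $\cardin{\BSet_i} \le \delta\cardin{\batch_i}$ and the tight size window, as opposed to only a global bound like $\sum_i\cardin{\BSet_i} = O(n/\bSize^{1/d})$, which is what the earlier constructions give. This requires choosing each separator so that the piece it creates has low ``surface complexity'' as seen from all of $\PS$, not merely a balanced count, and checking that the Steiner shield of an ancestor is not destroyed when further points are inserted deeper down (a deeper sphere can pass arbitrarily close to an ancestor's sphere). Modulo these issues, the remaining claims — geometric decay, $m = O(n/\bSize)$, and the separation property — are bookkeeping on top of the separator lemma and its two add-ons.
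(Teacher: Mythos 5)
Your approach is genuinely different from the paper's, and it contains two real gaps that you partly flag yourself. The paper does not build a recursive tree with color-aware separators. Instead it runs the Bhattiprolu--Har-Peled separator in a \emph{peeling loop}: repeatedly carve off a ball containing weight roughly $\bSize$, put the $O(\bSize^{1-1/d})$ separator points back into the working set as \emph{weighted} points (weight roughly $\bSize^{1/d}$ each), and iterate on what remains. This directly yields a division with $\cardin{\batch_i}=\Theta(\bSize)$ and $\cardin{\BSet_i}=O(\bSize^{1-1/d})$ (Lemma~\ref{lemma:div:1}). The reweighting of guards is exactly how the paper handles the worry you raise about an ancestor's shield being undermined by later spheres: guards remain genuine participants in every subsequent Voronoi computation, and the paper's separation lemma gives a careful inductive walk through the guard hierarchy to show that any common boundary between a batch cell and an outside cell is blocked by some guard in $\BSet_i$. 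For the balanced version, the paper first produces a \emph{finer} division with batches of size $\Theta(\delta\bSize)$ and then handles the discrepancy constraint \emph{post hoc}: since $\chi(\PS)=0$, it reorders the fine batches (a Riemann-rearrangement argument) so that every prefix sum of $\chi$ stays below $\delta\bSize/2$, then groups consecutive runs into batches of size roughly $\bSize$. Each merged batch automatically has $\cardin{\chi(\batch_i)} \le \delta\bSize$, with no need to control color at the separator level.

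Concretely, two ingredients of your sketch are not established. First, the ``color-aware separator'' --- a single sphere from the BH family that simultaneously nearly halves the count \emph{and} nearly halves $\sum\chi$ --- is not a routine add-on: the BH construction balances one measure, and balancing two measures at once is a new lemma that would need its own proof (the paper avoids needing it entirely). Second, you flag but do not resolve the ancestor-shield-preservation and per-batch boundary-size issues. As written, your tree construction does not establish property~(C.i), that $\BSet_i$ Voronoi-separates $\batch_i$ from $\PS\setminus\batch_i$ in $\VorX{\PS\cup\BSet_i}$: the Steiner net you sprinkle on an ancestor sphere is chosen dense enough relative to the recursion's local point set $\PS_v\cup\BSet_v$, but nothing guarantees it stays dense enough once the full $\PS$ (including far-away points that the recursion never saw near that sphere) participates in the Voronoi diagram, nor that the per-batch share of ancestor nets is only $O(\cardin{\batch_i}^{1-1/d})$. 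You correctly identify these as ``the hard part,'' but they are precisely where the proof lives, and the paper's peeling-plus-rearrangement route is designed to sidestep both.
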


\section{Computing a good local exchange (if it exists)}
\seclab{good:exchange}%

\begin{lemma}[Computing a good single center]
    \lemlab{aprox:marginal:1}%
    Let $\PS$ be a set of $n$ points in $\Re^d$, and let $\CS$ be a
    set of $k$ centers. Given a parameter $\eps \in (0,1)$, one can
    $(1-\eps)$-approximate the center $\ca \in \Re^d$ that maximizes
    the marginal value in $(n/\eps)^{O(d)}$ time. Formally, we have
    $\mrgY{\ca}{\CS} \geq (1-\eps)\max_{\cb \in \Re^d \setminus
       \CS}\mrgY{\cb}{\CS}$.
\end{lemma}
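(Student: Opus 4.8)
The plan is to reduce the task to approximately maximizing the explicit function $F(\ca) = \mrgY{\ca}{\CS}$ over $\ca \in \Re^d$, by enumerating a set $\GS$ of $(n/\eps)^{O(d)}$ candidate centers and returning the best one. First I would precompute $r_\pa = \dSY{\pa}{\CS}$ and $\profitY{\CS}{\pa} = \sFunc(r_\pa)$ for every $\pa \in \PS$, in $O(nk)$ time. For $\ca \notin \CS$ one then has $\mrgY{\ca}{\CS} = \sum_{\pa \in \PS} g_\pa\pth{\dY{\ca}{\pa}}$, where $g_\pa(t) = \max\pth{0,\, \sFunc(t) - \sFunc(r_\pa)}$ is a non-increasing ``bump'' supported on $[0, r_\pa)$: indeed $\ca$ improves $\pa$ precisely when $\dY{\ca}{\pa} < r_\pa$, and then by exactly $g_\pa(\dY{\ca}{\pa})$. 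So the problem is to approximately maximize a sum of monotone bumps, and in particular an optimal center $\copt$ (assume $M := \mrgY{\copt}{\CS} > 0$, the case $M = 0$ being trivial) must lie in $\ball(\pa, r_\pa)$ for every $\pa$ it improves.

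The candidate set would have two ingredients. (i) All points of $\PS$. (ii) For every ordered pair $\pa, \pb \in \PS$, an \emph{exponentially graded} grid inside $\ball(\pa, \dY{\pa}{\pb})$: for $i = 0, 1, 2, \dots$ cover the annulus $\ball(\pa, 2^{-i}\dY{\pa}{\pb}) \setminus \ball(\pa, 2^{-i-1}\dY{\pa}{\pb})$ by a grid of side length $\Theta\pth{\eps\, 2^{-i}\dY{\pa}{\pb}/\sqrt d}$, stopping after $O\pth{\log(d/\eps)}$ levels. Each pair contributes $(\sqrt d/\eps)^{O(d)}$ points, so $\cardin{\GS} = (n/\eps)^{O(d)}$. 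Finally, evaluate $\mrgY{\ca}{\CS}$ for every $\ca \in \GS$ in $O(nk)$ time (using the precomputed $r_\pa, \profitY{\CS}{\pa}$ and a scan over the points), and output the maximizer; the running time is $(n/\eps)^{O(d)}$.

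For correctness I would argue $\GS$ contains a center $\ca$ with $\mrgY{\ca}{\CS} \ge (1-\eps) M$. Let $\pa^*$ be the point closest to $\copt$ among those $\copt$ improves, let $\ell^* = \dY{\copt}{\pa^*}$, and let $\pb^*$ be the second closest point of $\PS$ to $\copt$, at distance $\ell_2 \ge \ell^*$. If $\ell_2 \le \ell^*/\eps$ then $\dY{\pa^*}{\pb^*} \in [\ell^*, 2\ell^*/\eps]$ by the triangle inequality (after discarding the degenerate sub-case $\ell_2 < 2\ell^*$, handled separately), so $\copt \in \ball(\pa^*, \dY{\pa^*}{\pb^*})$ and the graded grid for the pair $(\pa^*, \pb^*)$ has a point $\ca$ with $\dY{\ca}{\copt} \le (\eps/c)\,\ell^*$ for a large constant $c$; otherwise ($\ell_2 > \ell^*/\eps$) every point improved by $\copt$ is at distance more than $\ell^*/\eps$ from $\copt$, and I would take $\ca = \pa^* \in \GS$. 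In either case $\dY{\ca}{\pa} \le (1+\eps)\,\dY{\copt}{\pa}$ for every $\pa$ improved by $\copt$ (here $\pa^*$ being the closest such point is essential), while moving the center towards $\pa^*$ never decreases $\pa^*$'s own contribution; since each $g_\pa$ is non-increasing, this shows the candidate's profit is at least $\sum_\pa g_\pa\pth{(1+\eps)\dY{\copt}{\pa}}$.

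The \emph{main obstacle} is the final inequality $\sum_\pa g_\pa\pth{(1+\eps)\dY{\copt}{\pa}} \ge (1-\eps)\sum_\pa g_\pa\pth{\dY{\copt}{\pa}} = (1-\eps)M$: since $\sFunc$ is an \emph{arbitrary} non-increasing function a single bump $g_\pa$ can fall sharply, so a $(1\pm\eps)$ multiplicative perturbation of the distances does not by itself give an $\eps$-loss term by term. To close this I would additionally discretize the service \emph{values}: sample $O\pth{\eps^{-1}\log(\cdot)}$ levels geometrically between (roughly) $\eps M/n$ and $\sFunc(0)$, translate them into radii via $\sFunc^{-1}$, refine the grids and the point-centred candidates at these radii, and then snap $\copt$ so that for the points carrying all but an $\eps$-fraction of $M$ the rounded distance lands on the correct side of the relevant level; equivalently, one shows that the points on which the geometric perturbation is lossy together carry at most an $\eps$-fraction of $M$. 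I expect this coordinated geometric-plus-value discretization, and the accompanying accounting, to be the bulk of the proof, while the covering observation, the grid-size bound, and the per-candidate evaluation are routine.
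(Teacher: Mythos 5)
Your high-level plan is right — reduce to enumerating a polynomial-size candidate set and returning the best — and your running-time accounting is fine, but the proof as written has a genuine gap that you yourself flag and then only sketch a fix for. The pure geometric discretization (graded grids around $\ball(\pa,\dY{\pa}{\pb})$) only guarantees a candidate $\ca$ with $\dY{\ca}{\pa}\le(1+\eps)\dY{\copt}{\pa}$ for improved points, and since $\sFunc$ is an \emph{arbitrary} non-increasing function, this gives no control on $g_\pa$: a single step discontinuity sitting just below $\dY{\copt}{\pa}$ can wipe out the entire marginal contribution of $\pa$, and these losses can add up over all of $\PS$. So the inequality $\sum_\pa g_\pa((1+\eps)\dY{\copt}{\pa}) \ge (1-\eps)M$ you need to finish is simply false in general, and there is no term-by-term version of it to be had. (There is also a smaller slip: $\pb^*$ is the second-closest point of $\PS$ to $\copt$, but $\pa^*$ is the closest \emph{improved} point, so $\ell_2\ge\ell^*$ need not hold.)

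Your proposed fix — discretize service \emph{values} geometrically, translate to radii via $\sFunc^{-1}$, and refine the candidate set so that for all but an $\eps$-fraction of the mass the rounding lands on the right side of each level — is exactly the right idea, but it is left as ``the bulk of the proof.'' The paper carries out precisely this, and more cleanly: it first computes $\Delta=\max_{\cb\in\PS}\mrgY{\cb}{\CS}$ (which pins down the scale without knowing $M$, since the optimum lies in $[\Delta, n\Delta]$ roughly), sets geometric value levels $\ell(i)=(1+\eps/4)^i\,\eps\Delta/(4n)$ for $i=0,\dots,O(\eps^{-2}\log n)$, defines per-point radii $r(\pa,i)=\sFunc^{-1}\!\bigl(\min(\profitY{\CS}{\pa}+\ell(i),\sFunc(0))\bigr)$, and builds the arrangement of the resulting $O(n\log n/\eps^2)$ spheres. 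One candidate per face suffices: two centers in the same face improve every $\pa$ by amounts that are either both below $\ell(0)$ (negligible in aggregate) or within a $(1+\eps/4)$ factor, which is what makes the term-by-term comparison go through. This replaces your coordinated geometric-plus-value discretization and the ``snapping'' accounting entirely; you should drop the pairwise-grid construction and work in value space from the start.
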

\begin{proof}
    Let $\profitC = \profitY{\CS}{\PS}$,
    $\cc = \arg \max_{\cb \in \PS}\mrgY{\cb}{\CS}$,
    $\Delta = \mrgY{\cc}{\CS}$, and $\ub = \sFunc(0)$. Clearly, the
    profit of the optimal solution, after adding any number of centers
    to $\CS$ (but at least one), is somewhere in the interval
    $[\profitC + \Delta, n \ub] \subseteq [\profitC + \Delta, \profitC
    + n \Delta]$, which follows from $\ub \leq (\profitC/n)+\Delta$.
    For a point $\pa \in \PS$, let
    \begin{equation*}
        v(\pa,i)%
        =%
        \min \!\biga(\profitY{\CS}{\pa} +
        \ell(i), \, \ub \biga) \qquad \text{ where }
        \ell(i) = (1+\eps/4)^i \frac{\eps \Delta}{4n},
    \end{equation*}
    for $i=0, \ldots, N$, where $N=\ceil{16(\ln n)/\eps^2}$. Let
    $r(\pa, i) = \sFunc^{-1}\bigl(v(\pa,i)\bigr)$ (this is the radius
    from $\pa$ where a center provides service $v(\pa,i)$).

    Place a sphere of radius $r(\pa,i)$ around each point
    $\pa \in \PS$, for $i=0,\ldots, N$. Let $\Family$ be the resulting
    set of spheres. Compute the arrangement $\ArrX{\Family}$, and
    place a point inside each face of this arrangement. Let $\PSA$ be
    the resulting set of points.  Compute the point $\ca \in
    \PSA$ %
    realizing $\max_{\cb \in \PSA}
    \mrgY{\cb}{\CS}$, and return it as the desired new center.

    To show the correctness, consider the (open) face
    $F$ of $\ArrX{C}$, that contains $\copt$, where
    $\copt$ is the optimal center to be added. Let
    $\cb$ be any point of $\PSA$ in $F$. Let
    \begin{equation*}
        \mrgC(\cb, \pa)%
        =%
        \profitY{\cb \cup \CS}{\pa} -
        \profitY{\CS}{\pa}.
    \end{equation*}
    Define
    $\mrgC(\copt,\pa)$ similarly.  Clearly, we have $\mrgY{\cb}{\CS} =
    \sum_{\pa \in \PS} \mrgC(\cb,\pa)$.  Let
    $\PS_1$ be all the points $\pa$ of
    $\PS$ such that $\mrgC(\copt,\pa) \leq \mrgC(\cb,\pa) + \eps
    \Delta/(4n)$. Similarly, let $\PS_2$ be all the points
    $\pa$ of $\PS$, such that
    \begin{equation*}
        \mrgC(\copt,\pa) > \mrgC(\cb,\pa) + \eps \Delta/(4n).
    \end{equation*}
    For any point $\pa \in \PS_2$, by the choice of
    $N$, there exists an index $i$, such that $\ell(i) \leq
    \mrgC(\copt,\pa) < \ell(i+1)$. By the choice of
    $\cb$ from the arrangement, we have that $\mrgC(\cb,\pa) \geq
    \ell(i)$, which in turn implies that
    \begin{equation*}
        \mrgC(\cb,\pa) 
        \leq \mrgC(\copt,\pa)  < (1+\eps/4)
        \mrgC(\cb,\pa)
        \implies%
        \mrgC(\cb,\pa)  \geq (1-\eps/2)\mrgC(\copt,\pa).
    \end{equation*}
    We thus have the following
    \begin{align*}
      \mrgY{\cb}{\CS}
      &=%
        \sum_{\pa \in \PS} \mrgC(\cb,\pa) 
        = %
        \sum_{\pa \in \PS_1} \mrgC(\cb,\pa) 
        +\sum_{\pa \in \PS_2} \mrgC(\cb,\pa) \\
        & \geq %
        \sum_{\pa \in \PS_1} \bigl( \mrgC(\copt,\pa) - \frac{\eps\Delta}{4n}\bigr)
        +\sum_{\pa \in \PS_2}  (1-\eps/2)\mrgC(\copt,\pa)
      \\
      &\geq%
        (1-\eps/2)\sum_{\pa \in \PS} \mrgC(\copt,\pa)  - 
        \frac{\eps\Delta}{4}
        \geq %
        (1-\eps)\mrgY{\copt}{\CS}.
    \end{align*}
	
    The runtime follows from the observation that the number of faces
    in the arrangement is $(n/\eps)^{O(d)}$.~
\end{proof}

\begin{lemma}
    \lemlab{const:approx}%
    Given a set $\PS$ of $n$ points in the plane, and a parameter $k$,
    one can compute in polynomial time (i.e., $n^{O(d)}$) a constant
    approximation to $\popt = \optY{k}{\PS}$.
\end{lemma}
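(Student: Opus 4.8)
The plan is to run the classical greedy algorithm for maximizing a monotone submodular function under a cardinality constraint, using \lemref{aprox:marginal:1} as an approximate oracle for the best single center to insert. First dispense with the trivial regime: if $k \geq n$, return $n$ distinct centers placed at the points of $\PS$ (padding with arbitrary extra points if one insists on exactly $k$); since $\profitY{\CS}{\pa} \leq \sFunc(0)$ for every $\pa$ and every $\CS$, this attains $n\sFunc(0) = \popt$. So from now on assume $k \leq n$.

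Set $\CS_0 = \emptyset$ (so $\profitX{\CS_0} = 0$), and for $i = 1, \dots, k$ invoke \lemref{aprox:marginal:1} with the fixed constant $\eps = 1/2$ to obtain a center $\ca_i$ with $\mrgY{\ca_i}{\CS_{i-1}} \geq \tfrac{1}{2}\max_{\cb \in \Re^d \setminus \CS_{i-1}} \mrgY{\cb}{\CS_{i-1}}$, and put $\CS_i = \CS_{i-1} + \ca_i$. (We may assume $\ca_i \notin \CS_{i-1}$: the candidate set $\PSA$ constructed inside the proof of \lemref{aprox:marginal:1} consists of interior points of open faces and can be taken disjoint from the finite set $\CS_{i-1}$, so that $\mrgY{\ca_i}{\CS_{i-1}} = \profitX{\CS_i} - \profitX{\CS_{i-1}}$.) The algorithm returns $\CS_k$.

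For the analysis, let $\Copt$ be an optimal solution with $\cardin{\Copt} = k$, so $\profitX{\Copt} = \popt$. Fix $i$; by monotonicity, $\profitY{\Copt \cup \CS_{i-1}}{\PS} \geq \popt$. Inserting the centers of $\Copt \setminus \CS_{i-1}$ into $\CS_{i-1}$ one at a time and using the submodularity inequality proved above (each marginal only shrinks when taken with respect to a larger set), we get
\begin{align*}
  \popt - \profitX{\CS_{i-1}}
  \;&\leq\;
  \sum_{\cb \in \Copt \setminus \CS_{i-1}} \mrgY{\cb}{\CS_{i-1}}
  \;\leq\;
  k \max_{\cb \in \Re^d \setminus \CS_{i-1}} \mrgY{\cb}{\CS_{i-1}} .
\end{align*}
Hence the oracle guarantee yields $\profitX{\CS_i} - \profitX{\CS_{i-1}} = \mrgY{\ca_i}{\CS_{i-1}} \geq \tfrac{1}{2k}\bigl(\popt - \profitX{\CS_{i-1}}\bigr)$, i.e. $\popt - \profitX{\CS_i} \leq \bigl(1 - \tfrac{1}{2k}\bigr)\bigl(\popt - \profitX{\CS_{i-1}}\bigr)$. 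Unrolling this over $i = 1, \dots, k$ and using $1 - x \leq e^{-x}$ gives $\popt - \profitX{\CS_k} \leq \bigl(1 - \tfrac{1}{2k}\bigr)^{k}\popt \leq e^{-1/2}\popt$, so $\profitX{\CS_k} \geq \bigl(1 - e^{-1/2}\bigr)\popt$, a constant-factor approximation. The running time is $k \leq n$ calls to \lemref{aprox:marginal:1} with $\eps$ a fixed constant, each costing $(n/\eps)^{O(d)} = n^{O(d)}$, for a total of $n^{O(d)}$.

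I do not expect a genuine obstacle here -- this is the textbook approximate-greedy argument for monotone submodular maximization, and both ingredients (submodularity, and the approximate single-center oracle) are already in hand. The only points demanding a little care are: reducing to $k \leq n$; checking that the center returned by \lemref{aprox:marginal:1} can be taken outside the current set so that each step genuinely increases the profit by the claimed marginal; and noting that in the bound above it is crucial that some center of $\Copt \setminus \CS_{i-1}$ (hence one lying in $\Re^d \setminus \CS_{i-1}$, the set the oracle optimizes over) witnesses the inequality. With a bit more work one can replace $\eps = 1/2$ by any $\eps = \eps(n) \to 0$ with $1/\eps = n^{o(1)}$ and obtain a $(1 - 1/e - o(1))$-approximation in the same time bound, but a fixed constant already suffices here.
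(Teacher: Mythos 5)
Your proof is correct and follows essentially the same route as the paper: run the greedy algorithm for $k$ rounds, each time using \lemref{aprox:marginal:1} with a fixed constant $\eps$ as an approximate best-single-center oracle, then apply the standard Nemhauser--Wolsey--Fisher submodular-maximization bound to conclude a constant-factor approximation. The paper chooses $\eps = 0.1$ to get a $1/2$-approximation rather than your $\eps = 1/2$ and $(1 - e^{-1/2})$, and it elides the $k \geq n$ case and the $\ca_i \notin \CS_{i-1}$ subtlety that you carefully address, but the algorithm and the analysis are the same.
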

\begin{proof}
    Follows by using \lemref{aprox:marginal:1} in a greedy fashion $k$
    times, with $\eps=0.1$, to get a set of $k$ centers. The quality
    of approximation readily follows from known results about
    submodularity \cite{w-agass-82}. Indeed, let
    $v_i = \profitY{\CS_i}{\PS}$ be the service provided by the first
    $i$ centers computed.  By submodularity, and the quality guarantee
    of \lemref{aprox:marginal:1}, we have that
    $\mrgY{\ca_i}{\CS_{i-1}} \geq (1-\eps) (\popt - v_{i-1})/k$.  In
    particular, setting $\Delta_0= \popt$, and
    $\Delta_i = \popt - v_{i-1}$, we have that
    $\Delta_i \leq (1 - (1-\eps)/k) \Delta_{i-1}$.  As such,
    $\Delta_k \leq \exp\pth{ - k (1-\eps) /k }\Delta_0 =
    \popt/e^{\eps-1} \leq \popt/2$. Namely, we have
    $v_k \geq \popt/2$, as desired.
\end{proof}

For two sets of points $S$ and $\CS$ we define
$\mrgY{S}{\CS} = \profitY{\CS + S }{\PS} - \profitY{\CS - S }{\PS}$.

\begin{lemma}
    \lemlab{aprox:marginal:set}%
    Let $\PS$ be a set of $n$ points in $\Re^d$, and let $\CS$ be a
    set of $k$ centers. Given an integer $t\geq 1$, a parameter
    $\eps \in (0,1)$, in $(n/\eps)^{O(dt)}$ time, one can
    $(1-\eps)$-approximate the set $S \subset \Re^d$ with $|S|=t$ that
    maximizes the marginal value in $(n/\eps)^{O(td)}$ time. Formally,
    we have
    $\mrgY{S}{\CS} \geq (1-\eps)\max_{F \subset \Re^d,\
       |F|=t}\mrgY{F}{\CS}$.
\end{lemma}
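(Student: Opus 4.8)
The plan is to run the proof of \lemref{aprox:marginal:1} with the single ``best point of $\PSA$'' replaced by the best $t$-tuple of points of $\PSA$. I would keep the first half of that proof unchanged: let $\profitC=\profitY{\CS}{\PS}$, $\ub=\sFunc(0)$, $\cc=\arg\max_{\cb\in\PS}\mrgY{\cb}{\CS}$, $\Delta=\mrgY{\cc}{\CS}$, and for every $\pa\in\PS$ and every $i=0,\dots,N$ with $N=\ceil{16(\ln n)/\eps^2}$ place the sphere of radius $r(\pa,i)=\sFunc^{-1}\bigl(\min(\profitY{\CS}{\pa}+\ell(i),\ub)\bigr)$ around $\pa$, where $\ell(i)=(1+\eps/4)^i\eps\Delta/(4n)$. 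Let $\Family$ be this set of $O(n\log n/\eps^2)$ spheres, compute the arrangement $\ArrX{\Family}$, and let $\PSA$ contain one representative point inside each face of $\ArrX{\Family}$, over faces of all dimensions, so that $\cardin{\PSA}=(n/\eps)^{O(d)}$. The only change to the algorithm is the final step: instead of returning the best point of $\PSA$, enumerate all $t$-subsets $S\subseteq\PSA$ --- there are $\cardin{\PSA}^{t}=(n/\eps)^{O(dt)}$ of them --- evaluate $\mrgY{S}{\CS}$ for each, and output the maximizer. After an $O(nkd)$ precomputation of the values $\profitY{\CS}{\pa}$, each evaluation costs $O(ntd)$, so the total running time is $(n/\eps)^{O(dt)}$.

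For correctness, let $\copt=\brc{c^*_1,\dots,c^*_t}$ be the optimal $t$-set (which we may take disjoint from $\CS$), write $\mrgC(S,\pa)=\profitY{S\cup\CS}{\pa}-\profitY{\CS}{\pa}$, let $F_j$ be the unique (relatively open) face of $\ArrX{\Family}$ containing $c^*_j$, let $b_j\in\PSA$ be its representative, and set $S=\brc{b_1,\dots,b_t}$; this $S$ is one of the tuples the algorithm tries, so it suffices to show $\mrgY{S}{\CS}\ge(1-\eps)\mrgY{\copt}{\CS}$. The key observation is that $b_j$ and $c^*_j$ lie on the same side of (or both on) every sphere of $\Family$; consequently, arguing through the nearest optimal center $c^*_j=\nnY{\pa}{\copt}$ of a given point $\pa$, the per-point estimate of \lemref{aprox:marginal:1} transfers verbatim. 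Concretely, with $\PS_2=\Set{\pa}{\mrgC(\copt,\pa)>\mrgC(S,\pa)+\eps\Delta/(4n)}$ and $\PS_1=\PS\setminus\PS_2$, for $\pa\in\PS_2$ one picks $i$ with $\ell(i)\le\mrgC(\copt,\pa)<\ell(i+1)$; then $c^*_j=\nnY{\pa}{\copt}$ lies inside or on $\sphereB\pth{\pa,r(\pa,i)}$, hence so does $b_j$, so $\profitY{S}{\pa}\ge\profitY{b_j}{\pa}\ge\min(\profitY{\CS}{\pa}+\ell(i),\ub)$, and therefore (exactly as in \lemref{aprox:marginal:1}) $\mrgC(S,\pa)\ge(1-\eps/2)\mrgC(\copt,\pa)$. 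Summing over $\PS$ and bounding the total additive slack by $\cardin{\PS_1}\cdot\eps\Delta/(4n)\le\eps\Delta/4$ yields $\mrgY{S}{\CS}\ge(1-\eps/2)\mrgY{\copt}{\CS}-\eps\Delta/4\ge(1-\eps)\mrgY{\copt}{\CS}$, where the last step uses $\Delta\le\mrgY{\copt}{\CS}$.

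Two elementary ``scale'' facts feed into the last display, and they are the only places where $t>1$ needs a word beyond the $t=1$ proof. First, $\Delta\le\mrgY{\copt}{\CS}$, because padding $\brc{\cc}$ to a set of size $t$ never decreases the marginal (adding centers never decreases the profit). Second, the $N+1$ levels still bracket every per-point marginal: placing a center at the least-served input point gives $\Delta\ge\ub-\min_{\pa}\profitY{\CS}{\pa}$, hence $\mrgC(\copt,\pa)\le\ub-\profitY{\CS}{\pa}\le\Delta\le\ell(N)$ for every $\pa$ --- precisely the bound the $t=1$ proof uses implicitly. Beyond these, I do not expect a genuinely hard step: the statement is a mechanical lift of \lemref{aprox:marginal:1}. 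The one point that looked delicate at first is the possibility that an optimal center $c^*_j$ sits exactly on a sphere of $\Family$, so that ``the face containing $c^*_j$'' is a lower-dimensional face; this is harmless precisely because $\ArrX{\Family}$ is taken to include faces of all dimensions, so $c^*_j$ still lies in a single relatively open face whose representative reproduces its inside/on/outside pattern with respect to every sphere exactly --- which is all the per-point argument needs.
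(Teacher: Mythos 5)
Your proof is correct and follows essentially the same route as the paper's: reuse the sphere arrangement from Lemma~\ref{lemma:aprox:marginal:1}, take one representative per face containing an optimal center, and transfer the per-point bound by arguing through the nearest optimal center $\nnY{\pa}{\copt}$. The paper states this tersely and you fill in the same details (enumeration of $t$-subsets for the algorithm, $\Delta\le\mrgY{\copt}{\CS}$, and the level bracketing), so there is no substantive difference.
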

\begin{proof}
    Consider the optimal set $F$ of size $t$. Denote it by
    $S^*$. Compute the same arrangement as in the proof of
    \lemref{aprox:marginal:1}. Let $F_1, \ldots, F_t$ be the faces of
    $\ArrX{C}$ that contain the $t$ points of $S^*$. Pick an arbitrary
    point from each $F_i$ and let $S$ be the resulting point set of
    size $t$. Define
    \begin{equation*}
        \mrgC(S, \pa)%
        =%
        \profitY{S \cup \CS}{\pa} -
        \profitY{\CS}{\pa}.
    \end{equation*}
    and similarly $\mrgC(S^*, \pa)$.
	
    As in the proof of \lemref{aprox:marginal:1}, let $P_1$ be all the
    points of $P$ such that
    $\mrgC(S^*,\pa) \leq \mrgC(S,\pa) + \eps \Delta/4n$ and $P_2$ be
    all the points of $P$ such that
    \begin{equation*}
        \mrgC(S^*,\pa) > \mrgC(S,\pa) + \eps \Delta/4n.
    \end{equation*}
    We also conclude that $\mrgC(S,\pa) \geq (1-\eps/2)\mrgC(S^*,\pa)$
    for all $p \in P_2$. We get
    \begin{align*}
      \mrgY{S}{\CS}
      &=%
        \sum_{\pa \in \PS} \mrgC(S,\pa) 
        = %
        \sum_{\pa \in \PS_1} \mrgC(S,\pa) 
        +\sum_{\pa \in \PS_2} \mrgC(S,\pa) 
      \\&
      \geq %
      \sum_{\pa \in \PS_1} \bigl( \mrgC(S^*,\pa) - \frac{\eps\Delta}{4n}\bigr)
      +\sum_{\pa \in \PS_2}  (1-\eps/2)\mrgC(S^*,\pa)
      \\&%
      \geq%
      (1-\eps/2)\sum_{\pa \in \PS} \mrgC(S^*,\pa)  - 
      \frac{\eps\Delta}{4}
      \geq %
      (1-\eps)\mrgY{S^*}{\CS}.
    \end{align*}
	
    The runtime follows from the observation that the number of faces
    in the arrangement is $(n/\eps)^{O(d)}$ and that it is sufficient
    to consider subsets of size $t$ of the faces.
\end{proof}

\section{The local search algorithm}
\seclab{local:search}

The algorithm starts with a constant approximation, using the
algorithm of \lemref{const:approx}. Next, the algorithm performs local
exchanges, as long as it can find a local exchange that is
sufficiently profitable.

Specifically, let $\bSize= O(1/\eps^{d})$, and let
$\xs = O(\bSize/\eps) = O(1/\eps^{d+1})$. Assume that one can
``quickly'' check given a set of $k$ centers $\CS$, whether there is a
local exchange of size $\xs$, such that the resulting set of centers
provides service $(1+\eps^2/(16k)) \profitC_{\mathrm{curr}}$, where
$\profitC_{\mathrm{curr}}$ is the service of the current solution. To
this end, the algorithm considers at most $k^\xs$ possible subsets of
the current set of centers that might be dropped, and for each such
subset, one can apply \lemref{aprox:marginal:set}, to compute
(approximately) the best possible centers to add. If all such subsets
do not provide an improvement, the algorithm stops.

\paragraph*{Running time analysis.}

The algorithm starts with a constant approximation. As such, there
could be at most $O( k/\eps^2)$ local exchanges before the algorithm
must terminate. Finding a single such exchange requires applying
\lemref{aprox:marginal:set} $k^\xs$ times.
\lemref{aprox:marginal:set} is invoked with $t
=\xs$. %
The resulting running time is
$k^{\xs}(n/\eps)^{O(d \xs)} = (n/\eps)^{O(d/\eps^{d+1})}$, where we
remember that $k\leq n$.

\section{Correctness of the local search algorithm}
\seclab{correctness}

Here we show that if the local algorithm has reached a local optimum,
then it reached a solution that is a good approximation to the optimal
solution.

\begin{remark}
    \remlab{shortcut}%
    In the following, we simplify the analysis at some points, by
    assuming that the local solution takes an exchange if it provides
    any improvement (the algorithm, however, takes an exchange only if
    it is a significant improvement).  Getting rid of the assumption
    and modifying the analysis is straightforward, but tedious.
\end{remark}

\subsection{Notations}
Let $\Lopt$ and $\Opt$ be the local and optimal set of $k$ centers.
Let $\US = \Lopt \cup \Opt$. Assign a point of $\US$ color $+1$ if it
is in $\Opt$ and $-1$ if it is in $\Lopt$ (for the sake of simplicity
of exposition assume no point belong to both sets).  Let $\constS$ be
a sufficiently large constant.  For $\delta = \eps/\constS$ and
$\bSize= O(1/\delta^{d+1})$, compute a $\bSize$-division
$\Div = \brc{ (\batch_1, \BSet_1), \ldots, (\batch_m,\BSet_m)}$ of
$\Lopt \cup \Opt$, using \thmref{div:balanced}.

Let $\Lopt_i = \batch_i \cap \Lopt$, $\LMX{i} = \Lopt_i \cup \BSet_i$,
$\Opt_i = \batch_i \cap \Opt$, $\OMX{i} = \Opt_i \cup \BSet_i$,
$\lSize_i = \cardin{\Lopt_i}$, and $\oSize_i = \cardin{\Opt_i}$, for
all $i$.  Let $\LM = \bigcup_i \LMX{i}$, and
$\OM = \bigcup_i \OMX{i}$.  Let $\BSet = \bigcup_i \BSet_i$. By
construction, we have that
$\sum_i \cardin{\BSet_i} \leq \delta 2 k \leq \eps k / 4$ if $\constS$
is a sufficiently large constant.

\subsection{Submodularity implies slow degradation}

The following is a well known implication of submodularity. We include
the proof for the sake of completeness.

\begin{lemma}
    \lemlab{slow:degradation}%
    Let $\CS$ be a set of $k$ centers. Then, for any $t \leq k$, there
    exists a subset $\CS' \subseteq \CS$ of size $t$, such that
    \begin{math}
        \profitX{\CS'} \geq \frac{t}{k} \profitX{\CS},
    \end{math}
    where $\profitX{\CS} = \profitY{\CS}{\PS}$.
\end{lemma}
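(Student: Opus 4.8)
The plan is to show that a uniformly random subset of size $t$ already achieves the claimed bound in expectation, so some subset attains it. First I would observe that the set function $\CS \mapsto \profitY{\CS}{\PS} = \sum_{\pa \in \PS} \max_{\ca \in \CS} \sFunc(\dY{\pa}{\ca})$ is monotone (adding centers only increases the max over $\CS$) and submodular — this is exactly the submodularity inequality already derived in the excerpt, $\mrgY{\ca}{\CSA} \geq \mrgY{\ca}{\CS}$ for $\CSA \subseteq \CS$. Also $\profitX{\emptyset} = 0$ (the empty set provides no service, or one treats $\profit(\emptyset,\pa)=0$). So we are in the standard setting of a monotone, normalized, submodular function.

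Next I would invoke the classical averaging fact: for a monotone submodular $f$ with $f(\emptyset)=0$, and a ground set $\CS$ of size $k$, a uniformly random subset $\CS'$ of size $t$ satisfies $\Ex{f(\CS')} \geq \frac{t}{k} f(\CS)$. The cleanest way to see this is to build $\CS$ by adding its elements one at a time in a uniformly random order, writing $f(\CS) = \sum_{j=1}^{k} \mrgC(\ca_{\pi(j)}, \{\ca_{\pi(1)},\dots,\ca_{\pi(j)}\})$ as a telescoping sum of marginals; by submodularity each marginal in this sum is at most the marginal of that same element with respect to the random prefix of size $t$, and summing the first $t$ telescoping terms gives $f(\{\ca_{\pi(1)},\dots,\ca_{\pi(t)}\})$. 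Taking expectations over $\pi$, symmetry makes every element equally likely to contribute in each of the $k$ positions, yielding $\Ex{f(\CS')} \ge \frac{t}{k}f(\CS)$. Since the expectation of $\profitX{\CS'}$ over random $\CS'$ of size $t$ is at least $\frac{t}{k}\profitX{\CS}$, there must exist a concrete $\CS' \subseteq \CS$ with $|\CS'| = t$ and $\profitX{\CS'} \geq \frac{t}{k}\profitX{\CS}$, which is the claim.

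An alternative, perhaps even shorter, argument: pick $\CS'$ by deleting the $k-t$ centers of smallest marginal loss greedily, or directly average $\profitX{\CS \setminus \{\ca\}}$ over the $k$ choices of $\ca$ using $\sum_{\ca \in \CS}\bigl(\profitX{\CS} - \profitX{\CS-\ca}\bigr) = \sum_{\ca\in\CS}\mrgY{\ca}{\CS} \le \profitX{\CS}$ (a consequence of submodularity and $\profitX{\emptyset}=0$), to conclude $\min_{\ca} \profitX{\CS-\ca} \ge \frac{k-1}{k}\profitX{\CS}$, then iterate $k-t$ times. Each deletion loses at most a $\frac{1}{(\text{current size})}$ fraction, and the telescoping product $\frac{k-1}{k}\cdot\frac{k-2}{k-1}\cdots\frac{t}{t+1} = \frac{t}{k}$ gives the bound. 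I expect the main (minor) obstacle is just being careful that $\profit$ is genuinely submodular and normalized as a function of the center set — but this is already established in the ``Submodularity'' subsection of the excerpt, so the proof is essentially a citation of the standard averaging lemma plus that observation.
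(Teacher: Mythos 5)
Your second (``alternative'') argument --- greedily delete the center of smallest marginal value, bound the per-step loss by a $\frac{1}{|\CS_{i-1}|}$ fraction using the averaging inequality $\sum_{\ca\in\CS_{i-1}}\mrgY{\ca}{\CS_{i-1}-\ca}\le\profitX{\CS_{i-1}}$, and telescope $\frac{k-1}{k}\cdot\frac{k-2}{k-1}\cdots\frac{t}{t+1}=\frac{t}{k}$ --- is precisely the paper's proof. Your primary random-subset argument is a valid alternative route to the same fact (essentially the probabilistic restatement of greedy deletion), but the sketch is loose at the key step: the marginals $X_j=\mrgC\bigl(\ca_{\pi(j)},\{\ca_{\pi(1)},\dots,\ca_{\pi(j-1)}\}\bigr)$ in the random-order telescoping of $f(\CS)$ are \emph{not} exchangeable across positions --- in fact $\mathbf{E}[X_j]$ is non-increasing in $j$, which itself needs a small submodularity/coupling argument --- so ``symmetry'' alone does not give $\mathbf{E}\bigl[\sum_{j\le t}X_j\bigr]\ge\frac{t}{k}f(\CS)$. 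You would either prove that monotonicity of $\mathbf{E}[X_j]$ explicitly, or, more cleanly and avoiding permutations entirely, apply the inequality $\sum_{\ca\in S}\mrgY{\ca}{S-\ca}\le f(S)$ (which you wrote down) to a random size-$(t{+}1)$ set $S$ and a uniformly random $\ca\in S$, deducing $g(t)\ge\frac{t}{t+1}\,g(t+1)$ for $g(t)$ the expected profit of a uniformly random size-$t$ subset, and telescope. Both routes rest on the same two ingredients --- submodularity and $\profitX{\emptyset}=0$; the deterministic greedy-deletion version is the more economical writeup and is the one the paper uses.
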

\begin{proof}
    Let $\CS_0 = \CS$. In the $i$\th iteration, we greedily remove the
    point of $\CS_{i-1}$ that is minimizing the marginal
    value. Formally,
    \begin{equation*}
        \cb_i = \arg \min_{\ca \in \CS_{i-1}} \mrgY{\ca}{\CS_{i-1}-\ca}, 
    \end{equation*}
    and $\CS_i = \CS_{i-1} - \cb_i$. By submodularity, we have that
    $\mrgY{\cb_i}{\CS_{i-1}-\cb_i} \leq \profitX{\CS_{i-1}}
    /\cardin{\CS_{i-1}}$. As such, we have
    \begin{align*}
      \profitX{\CS_i}%
      &=%
        \profitX{\CS_{i-1}} -
        \mrgY{\cb_i}{\CS_{i-1}-\cb_i}
        \geq
        \pth{1 - \frac{1}{k-i+1}}  \profitX{\CS_{i-1}}
        =%
        \frac{k-i}{k-i+1}  \profitX{\CS_{i-1}}
      \\
      &\geq%
        \frac{k-i}{k-i+1} \cdot \frac{k-i+1}{k-i+1+1} \cdots
        \frac{k-1}{k} \profitX{\CS_0} 
        =%
        \frac{k-i}{k} \profitX{\CS}.
    \end{align*}
    The claim now readily follows by taking the set $\CS_{k-t}$.
\end{proof}

\subsection{Boundary vertices are not profitable}
First, we argue that adding the boundary points, does not increase the
profit/service significantly, for either the local or optimal
solutions.

\begin{lemma}
    \lemlab{b:not:profitable}%
    $\profitX{ \ts{\LM}\ts } \leq (1+\eps/4)\profitX{\Lopt}$ and %
    $\profitX{ \ts{\OM}\ts } \leq (1+\eps/4)\profitX{\Opt}$.
\end{lemma}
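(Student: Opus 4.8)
The plan is to bound the two quantities separately: for $\LM = \Lopt\cup\BSet$ I would exploit the \emph{local} optimality of $\Lopt$, while for $\OM = \Opt\cup\BSet$ I would use the \emph{global} optimality of $\Opt$ together with \lemref{slow:degradation}. The common starting point is that $\BSet=\bigcup_i\BSet_i$ is disjoint from $\Lopt\cup\Opt$, and that $\cardin{\BSet}\le\sum_i\cardin{\BSet_i}\le\eps k/4$.

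For the first inequality the key step is to show that at a local optimum every single point contributes little: $\mrgY{b}{\Lopt}\le\profitX{\Lopt}/k$ for every $b\notin\Lopt$. To prove this I would pick $c\in\Lopt$ minimizing $\mrgY{c}{\Lopt}$; as in the proof of \lemref{slow:degradation}, $\mrgY{c}{\Lopt}=\profitX{\Lopt}-\profitX{\Lopt-c}\le\profitX{\Lopt}/k$. Since $\Lopt$ is a local optimum (using \remref{shortcut}, so that the single swap $\Lopt-c+b$ would have been taken if profitable), we have $\profitX{\Lopt-c+b}\le\profitX{\Lopt}$; on the other hand submodularity gives $\profitX{\Lopt-c+b}-\profitX{\Lopt-c}\ge\profitX{\Lopt+b}-\profitX{\Lopt}$, and chaining these two facts yields $\mrgY{b}{\Lopt}=\profitX{\Lopt+b}-\profitX{\Lopt}\le\profitX{\Lopt}-\profitX{\Lopt-c}=\mrgY{c}{\Lopt}\le\profitX{\Lopt}/k$. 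Then, writing $\BSet=\{b_1,\dots,b_s\}$ with $s\le\eps k/4$ and telescoping $\profitX{\LM}-\profitX{\Lopt}=\mrgY{\BSet}{\Lopt}$ over the $b_j$'s, submodularity bounds each term by $\mrgY{b_j}{\Lopt}$, so $\profitX{\LM}-\profitX{\Lopt}\le\sum_{j=1}^s\mrgY{b_j}{\Lopt}\le s\cdot\profitX{\Lopt}/k\le(\eps/4)\profitX{\Lopt}$, which is the first claim.

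For the second inequality I would set $k'=\cardin{\OM}=k+\cardin{\BSet}\le(1+\eps/4)k$ and apply \lemref{slow:degradation} with center set $\OM$ (of size $k'$) and parameter $t=k$, producing a $k$-point subset $\CS'\subseteq\OM$ with $\profitX{\CS'}\ge(k/k')\profitX{\OM}$. Since $\cardin{\CS'}=k$, optimality of $\Opt$ gives $\profitX{\CS'}\le\popt=\profitX{\Opt}$, hence $\profitX{\OM}\le(k'/k)\profitX{\Opt}\le(1+\eps/4)\profitX{\Opt}$.

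The one point that needs care is the appeal to local optimality in the argument for the first inequality: the algorithm only performs size-$\xs$ exchanges through the approximate routine of \lemref{aprox:marginal:set}, and only when the gain is significant, rather than arbitrary exact single swaps. This is precisely the gap that \remref{shortcut} lets us ignore — a single swap $\Lopt-c+b$ is a special case of a size-$\xs$ exchange (drop $c$ together with $\xs-1$ other centers and re-add those together with $b$), so a profitable single swap would have triggered a move. Making this fully rigorous is the ``straightforward but tedious'' accounting referred to there; everything else is routine submodularity bookkeeping.
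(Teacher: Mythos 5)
Your argument for the first inequality is essentially the paper's: you show $\mrgY{b}{\Lopt}\le\profitX{\Lopt}/k$ for each boundary point $b$ by combining the existence of a low-marginal-value center $c\in\Lopt$ (as in \lemref{slow:degradation}) with local optimality of $\Lopt$ under the single swap $\Lopt-c+b$, then telescope over $\BSet$ using submodularity. The paper phrases the swap step as a proof by contradiction, but the content is identical, and your remark about how a single swap is subsumed by a size-$\xs$ exchange (plus \remref{shortcut}) correctly addresses the same gap the paper waves away. For the second inequality you take a genuinely different route. The paper simply says it ``follows by a similar argument,'' meaning the same swap-based bound $\mrgY{\ca}{\Opt}\le\profitX{\Opt}/k$ is obtained from the \emph{global} optimality of $\Opt$ (a profitable swap would yield a better $k$-center solution). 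You instead apply \lemref{slow:degradation} to the enlarged set $\OM$ of size $k'\le(1+\eps/4)k$ to extract a $k$-point subset $\CS'$ with $\profitX{\CS'}\ge(k/k')\profitX{\OM}$, and then use $\profitX{\CS'}\le\profitX{\Opt}$ directly. This is correct and gives exactly the claimed $(1+\eps/4)$ factor. It is arguably cleaner than the paper's intended argument for this half: it avoids the contradiction framing, does not require re-running the swap machinery for $\Opt$, and makes transparent that only global optimality (not any local-search structure) is used. Both approaches are standard submodularity bookkeeping and both are valid.
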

\begin{proof}
    Let $p = \cardin{\BSet}$.  Consider a point $\ca \in \BSet$, and
    observe that $\mrgY{\ca}{\Lopt} \leq \profitX{\Lopt}/k$. This is a
    standard consequence of submodularity and greediness/local
    optimality. To see that, order the centers of
    $\Lopt = \brc{\ca_1, \ldots, \ca_k}$ in an arbitrary order. Let
    $\mrgC_i = \mrgY{\ca_i}{\brc{\ca_1,\ldots, \ca_{i-1}}} \geq 0$,
    for $i=1,\ldots, k$, and observe that
    $\profitX{\Lopt} = \sum_{i=1}^k \mrgC_i$. As such, there exists an
    index $i$, such that $\mrgC_i \leq \profitX{\Lopt}/k$.  By
    submodularity, we have that
    \begin{math}
        \mrgY{\ca_i}{\Lopt - \ca_i}%
        \leq%
        \mrgY{\ca_i}{\brc{\ca_1,\ldots, \ca_{i-1}}} =%
        \mrgC_i%
        \leq%
        \profitX{\Lopt}/k,
    \end{math}
    and
    \begin{math}
        \mrgY{\ca}{\Lopt - \ca_i} \geq \mrgY{\ca}{\Lopt }.
    \end{math}
    Assume, for the sake of contradiction, that
    $\mrgY{\ca}{\Lopt} > \profitX{\Lopt}/k$.  We have that
    \begin{align*}
      \profitX{ \Lopt - \ca_i + \ca} %
      &=%
        \profitX{\Lopt} - \mrgY{\ca_i}{\Lopt - \ca_i} +
        \mrgY{\ca}{\Lopt - \ca_i}
        \geq%
        \profitX{\Lopt} -\frac{\profitX{\Lopt}}{k}  + \mrgY{\ca}{\Lopt}
      \\&
      > %
      \profitX{\Lopt} -\frac{\profitX{\Lopt}}{k}  +
      \frac{\profitX{\Lopt}}{k}%
      =%
      \profitX{\Lopt}.
    \end{align*}
    But the local search algorithm considered this swap, which means
    that $\Lopt - \ca_i + \ca$ can not be more profitable than the
    local solution. A contradiction (see \remref{shortcut}).

    Setting $\BSet = \{\cb_1, \ldots, \cb_p\}$, we have
    \begin{equation*}
        \profitX{ \ts{\LM}\ts }%
        =%
        \profitX{ \Lopt } + \sum_{i=1}^p\mrgY{\cb_i}{ \Lopt +
           \cb_1+ \cdots + \cb_{i-1}}
        \leq%
        \profitX{ \Lopt } + \sum_{i=1}^p \mrgY{\cb_i}{ \Lopt}
        \leq%
        \profitX{ \Lopt } + p \profitX{\Lopt}/k
        \leq%
        (1+\eps/4) \profitX{ \Lopt },
    \end{equation*}
    since $p = \cardin{\BSet} \leq \eps k /4$. %

    The second claim follows by a similar argument.
\end{proof}

\subsection{If there is a gap, then there is a swap}

The contribution of the clusters $\Lopt_i$ and $\Opt_i$ is
\begin{equation}
    \mrgC \Lopt_i = 
    \mrgY{\Lopt_i}{\LM \setminus \Lopt_i}.
    \qquad\text{and}\qquad%
    \mrgC \Opt_i = 
    \mrgY{\Opt_i}{\OM \setminus \Opt_i},
    \eqlab{contrib}%
\end{equation}
respectively.  Notice that, because of the separation property, the
points in $\PS$ that their coverage change when we move from
$\LM \setminus \Lopt_i$ to $\LM$, are points that are served by
$\BSet_i \subseteq \LM \setminus \Lopt_i$ (same holds for
$\OM \setminus \Opt_i$ and $\OM$).

This implies that
\begin{equation*}
    \mrgY{\Lopt}{\BSet} = \sum_i \mrgC \Lopt_i
    \qquad\text{and}\qquad%
    \mrgY{\Opt}{\BSet} = \sum_i \mrgC \Opt_i.
\end{equation*}

In the following, we assume that
\begin{math}
    \profitX{\Lopt} < (1-\eps) \profitX{\Opt}.
\end{math}
By \lemref{b:not:profitable} this implies that
\begin{math}
    \profitX{ \ts{\LM}\ts } \leq (1+\eps/4)\profitX{\Lopt} <
    (1+\eps/4)(1-\eps) \profitX{\Opt} \leq%
    (1-\eps/2) \profitX{\Opt} \leq (1-\eps/2) \profitX{\OM}.
\end{math}
As such, we have
\begin{align*}
  \profitX{\LM} < (1-\eps/2) \profitX{\OM}
  \implies
  &
    \profitX{\BSet} + \mrgY{\Lopt}{\BSet}
    <%
    (1-\eps/2) \pth{ \profitX{\BSet} + \mrgY{\Opt}{\BSet}}
  \\%
  \implies
  &%
    \mrgY{\Lopt}{\BSet}
    <%
    (1-\eps/2)  \mrgY{\Opt}{\BSet}
    - (\eps/2)\profitX{\BSet} 
  \\%
  \implies
  &%
    \mrgY{\Lopt}{\BSet}
    <%
    (1-\eps/2)  \mrgY{\Opt}{\BSet}
  \\%
  \implies
  &%
    (\eps/2) \mrgY{\Opt}{\BSet} 
    <%
    \sum_i (\mrgC \Opt_i - \mrgC \Lopt_i).
\end{align*}
By averaging, this implies that there exists an index $t$, such that
\begin{align}
  & \mrgC \Opt_t - \mrgC \Lopt_t 
    > %
    \frac{\eps}{2k}\mrgY{\Opt}{\BSet} 
  >%
  \frac{\eps}{2k}\bigl(\mrgY{\Opt}{\BSet}-\mrgY{\Lopt}{\BSet}\bigr)%
  \\
  &=%
  \frac{\eps}{2k}\bigl(\profitX{\OM}-\profitX{\LM}\bigr)%
  \geq%
  \frac{\eps}{2k} \cdot \frac{\eps}{2} \profitX{\OM}
  \geq%
  \frac{\eps^2}{4k} \profitX{\OM},
  \eqlab{good:exchange}%
\end{align}
where in the second to last inequality we use that
$\profitX{\LM} < (1-\eps/2) \profitX{\OM}$. Namely, there is a batch
where the local and optimal solution differ significantly.

\subsubsection{An unlikely scenario}
Assume that
$\cardin{\Lopt_t} \geq \cardin{\Opt_t} + \cardin{\BSet_t}$. We then
have that
\begin{equation*}
    \profitX{\Lopt +\BSet_t - \Lopt_t + \Opt_t }%
    =%
    \profitX{\Lopt +\BSet_t} - \mrgC{\Lopt_t} + \mrgC \Opt_t
    \geq%
    \profitX{\Lopt}  + \frac{\eps^2}{4k } \profitX{\OM}.
\end{equation*}
But this is impossible, since the local search algorithm would have
performed the exchange $\Lopt +\BSet_t - \Lopt_t + \Opt_t $, since
$\cardin{\BSet_t} + \cardin{\Lopt_t} + \cardin{\Opt_t}$ is smaller
than the size of exchanges considered by the algorithm.

\subsubsection{The general scenario}

\begin{lemma}
    \lemlab{shrink}%
    There exists a subset $Y \subseteq \Opt_t$, such that
    $\cardin{\Lopt_t} \geq \cardin{Y} + \cardin{\BSet_t}$, and
    \begin{equation*}
        \mrgY{Y}{\OM \setminus \Opt_t}
        \geq%
        \mrgC \Lopt_t + \frac{\eps^2}{8k}\profitX{\OM},
    \end{equation*}
    see \Eqref{contrib}.
\end{lemma}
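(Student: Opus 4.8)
The plan is to obtain $Y$ by greedily pruning $\Opt_t$ inside the context $\OM \setminus \Opt_t$, bounding the marginal value lost via submodularity, and then charging that loss against the batch gap of \Eqref{good:exchange}. Write $\lSize_t = |\Lopt_t|$ and $\oSize_t = |\Opt_t|$, so that $|\batch_t| = \lSize_t + \oSize_t$. From \thmref{div:balanced} we have $|\BSet_t| \le \delta|\batch_t|$ and $|\oSize_t - \lSize_t| = |\chi(\batch_t)| \le \delta|\batch_t|$; hence $\oSize_t \ge \tfrac{1-\delta}{2}|\batch_t|$, and $s := \lSize_t - |\BSet_t|$ is a positive integer (since $\delta = \eps/\constS < 1/3$ and the batches are non-empty). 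Because we are not in the ``unlikely scenario'' we have $\lSize_t < \oSize_t + |\BSet_t|$, so $j := \oSize_t - s = \oSize_t - \lSize_t + |\BSet_t|$ satisfies $1 \le j \le |\oSize_t - \lSize_t| + |\BSet_t| \le 2\delta|\batch_t|$, and therefore $j/\oSize_t \le 4\delta/(1-\delta)$.

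Next I would reuse the greedy-removal argument of \lemref{slow:degradation}, applied now to the set function $g(S) = \mrgY{S}{\OM \setminus \Opt_t}$ on the ground set $\Opt_t$; this function is monotone and submodular with $g(\emptyset) = 0$, these being exactly the properties recorded in \secref{prelim}. Deleting from $\Opt_t$, one at a time, the element of currently smallest marginal, and stopping after $j$ deletions, produces a set $Y \subseteq \Opt_t$ with $|Y| = s = \lSize_t - |\BSet_t|$ -- so $|\Lopt_t| \ge |Y| + |\BSet_t|$, which is the first required property -- and, by the same telescoping estimate as in \lemref{slow:degradation}, with $g(Y) \ge \tfrac{s}{\oSize_t}\, g(\Opt_t) = \bigl(1 - \tfrac{j}{\oSize_t}\bigr)\,\mrgC \Opt_t \ge \bigl(1 - \tfrac{4\delta}{1-\delta}\bigr)\,\mrgC \Opt_t$. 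Thus the lost marginal value is $\mrgC \Opt_t - \mrgY{Y}{\OM \setminus \Opt_t} \le \tfrac{4\delta}{1-\delta}\,\mrgC \Opt_t$.

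It remains to verify $\mrgY{Y}{\OM \setminus \Opt_t} \ge \mrgC \Lopt_t + \tfrac{\eps^2}{8k}\profitX{\OM}$, i.e.\ that this loss is at most $(\mrgC \Opt_t - \mrgC \Lopt_t) - \tfrac{\eps^2}{8k}\profitX{\OM}$; plugging in the batch gap $\mrgC \Opt_t - \mrgC \Lopt_t > \tfrac{\eps^2}{4k}\profitX{\OM}$ from \Eqref{good:exchange} reduces this to $\tfrac{4\delta}{1-\delta}\,\mrgC \Opt_t \le \tfrac{\eps^2}{8k}\profitX{\OM}$. This is where the large constant $\constS$ (hence the smallness of $\delta = \eps/\constS$) is spent. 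The crude bound $\mrgC \Opt_t \le \sum_i \mrgC \Opt_i = \mrgY{\Opt}{\BSet} \le \profitX{\OM}$ is too weak here, so one must either choose the index $t$ in \Eqref{good:exchange} so that its gap is a $\Theta(\eps)$-fraction of $\mrgC \Opt_t$ -- obtainable by a weighted averaging over \Eqref{contrib} that starts from $\mrgY{\Lopt}{\BSet} < (1-\eps/2)\mrgY{\Opt}{\BSet}$ -- or else invoke the Voronoi-separation observation following \Eqref{contrib} (the points whose service changes when $\Opt_t$ is removed from $\OM$ are served by $\BSet_t$) to localize $\mrgC \Opt_t$ and bound it through \lemref{b:not:profitable}.

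The main obstacle is exactly this last step: matching the $O(\delta)$ multiplicative degradation loss against the batch gap without letting $\constS$ depend on $k$. Everything else is routine -- the balance estimates of the first paragraph, and a verbatim reuse of the submodular slow-degradation bound. Granting \lemref{shrink}, the exchange $\Lopt - \Lopt_t + \BSet_t + Y$ has size at most $\lSize_t \le \bSize \le \xs$, and by the lemma together with submodularity and the Voronoi separation it strictly increases the service of the local solution, contradicting local optimality -- which is the purpose of the whole argument.
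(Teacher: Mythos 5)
Your setup is correct---the bound $j/\oSize_t \le 4\delta/(1-\delta)$ via the balance and boundary-size guarantees of \thmref{div:balanced}, and the application of the slow-degradation argument to $g(S) = \mrgY{S}{\OM \setminus \Opt_t}$---and you correctly diagnose the one genuine difficulty: the shrinkage loss is an $O(\delta)$-fraction of $\mrgC\Opt_t$, and nothing forces $\mrgC\Opt_t$ to be small compared to $\profitX{\OM}/k$, so charging the loss against the gap of \Eqref{good:exchange} does not go through with $\constS$ independent of $k$. You then list two candidate fixes but carry out neither, so as written the proof is incomplete precisely at the step you flag.

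Your first option is the right one, and it is exactly what the paper does. Before averaging, subtract $(\eps/8)\mrgY{\Opt}{\BSet} = (\eps/8)\sum_i\mrgC\Opt_i$ from both sides of $(\eps/2)\mrgY{\Opt}{\BSet} < \sum_i\bigl(\mrgC\Opt_i - \mrgC\Lopt_i\bigr)$, obtaining $(\eps/4)\mrgY{\Opt}{\BSet} < \sum_i\bigl((1-\eps/8)\mrgC\Opt_i - \mrgC\Lopt_i\bigr)$, and re-select $t$ so that $(1-\eps/8)\mrgC\Opt_t - \mrgC\Lopt_t > (\eps/4k)\mrgY{\Opt}{\BSet} > (\eps^2/8k)\profitX{\OM}$. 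With this $t$, your own shrinkage bound $g(Y) \ge \bigl(1 - \tfrac{4\delta}{1-\delta}\bigr)\mrgC\Opt_t \ge (1-\eps/8)\mrgC\Opt_t$ (the second inequality needs only $\constS \ge 64$) immediately gives $g(Y) \ge \mrgC\Lopt_t + \tfrac{\eps^2}{8k}\profitX{\OM}$, closing the argument.

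Your second option does not work: Voronoi separation together with \lemref{b:not:profitable} gives no upper bound on an individual $\mrgC\Opt_t$ relative to $\profitX{\OM}/k$; for small $k$ a single batch can carry a constant fraction of $\profitX{\OM}$. The re-selection of $t$ is the essential missing step. A harmless cosmetic difference: you shrink $\Opt_t$ to size exactly $\lSize_t - |\BSet_t|$ so the size constraint is automatic, whereas the paper shrinks to $(1-\eps/8)|\Opt_t|$ and verifies $\lSize_t \ge (1-\eps/8)\oSize_t + |\BSet_t|$ separately; either works once $t$ is re-chosen.
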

\begin{proof}
    We have that
    $(\eps/2) \mrgY{\Opt}{\BSet} < \sum_i (\mrgC \Opt_i - \mrgC
    \Lopt_i)$. Subtracting $(\eps/8) \mrgY{\Opt}{\BSet}$ from both
    sides implies that
    $(\eps/4) \mrgY{\Opt}{\BSet} < \sum_i ((1-\eps/8)\mrgC \Opt_i -
    \mrgC \Lopt_i)$. This in turn implies that there exists $t$ such
    that
    $(1-\eps/8)\mrgC \Opt_t - \mrgC \Lopt_t > (\eps/(4k))
    \mrgY{\Opt}{\BSet}$. Arguing, as above, we have that
    $(1-\eps/8)\mrgC \Opt_t - \mrgC \Lopt_t >
    (\eps^2/(8k))\profitX{\OM}$.

    Consider the following (submodular) function
    \begin{equation*}
        f(X) =
        \mrgY{X}{\OM \setminus \Opt_t},
    \end{equation*}
    By \lemref{slow:degradation} (or more precisely arguing as in this
    lemma), we have that there exists a set $Y \subset \Opt_t$, such
    that $\cardin{Y} = (1-\eps/8) \cardin{\Opt_t}$ and
    $f(Y) \geq (1-\eps/8) f(\Opt_t) = (1-\eps/8) \mrgC \Opt_t$.  As
    such, we have that
    \begin{equation*}
        \mrgY{Y}{\OM \setminus \Opt_t}
        \geq%
        (1-\eps/8)  \mrgC \Opt_t 
        \geq%
        \mrgC \Lopt_t + (\eps^2/8k)\profitX{\OM}.
    \end{equation*}

    As for the size of $Y$. Observe that by \thmref{div:balanced}, we
    have
    \begin{math}
        \cardin{\BSet_i}\leq \frac{\eps}{\constS}(\cardin{\Opt_i} +
        \cardin{\Lopt_i})
    \end{math}
    and
    \begin{math}
        \cardin{\bigl. \cardin{\Opt_i} - \cardin{\Lopt_i}} \leq
        \frac{\eps}{\constS}(\cardin{\Opt_i} + \cardin{\Lopt_i}).
    \end{math}
    This readily implies that
    \begin{math}
        \cardin{\Opt_i} \leq%
        ( 1+4 \eps/\constS)\cardin{\Lopt_i},
    \end{math}
    and
    \begin{math}
        \cardin{\Lopt_i} \leq%
        ( 1+4 \eps/\constS)\cardin{\Opt_i},
    \end{math}
    if $\constS$ is sufficiently large.  As such, we have that
    \begin{equation*}
        \cardin{\Lopt_t}%
        \geq%
        \frac{\cardin{\Opt_t}}{1+4 \eps/\constS}
        \geq 
        \pth{1-4 \eps/\constS} \cardin{\Opt_t}
        =%
        \pth{1-\eps/8}\cardin{\Opt_t} + (\eps/8 - 4 \eps/\constS) \cardin{\Opt_t}
        \geq%
        \cardin{Y} + \frac{\eps}{16} \cardin{\Opt_t} 
        \geq%
        \cardin{Y} + \cardin{\BSet_i},
    \end{equation*}
    if $\constS \geq 64$.
\end{proof}

\begin{lemma}
    The local search algorithm computes a $(1-\eps)$-approximation to
    the optimal solution.
\end{lemma}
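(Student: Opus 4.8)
The plan is to argue by contradiction. Suppose the local search terminates with a solution $\Lopt$ but $\profitX{\Lopt} < (1-\eps)\profitX{\Opt}$; I will exhibit a local exchange of size at most $\xs$ that strictly increases the service, contradicting termination. Since the entire development of \secref{correctness} from the ``if there is a gap, then there is a swap'' paragraph onward runs under exactly this hypothesis, I may invoke \lemref{shrink} as a black box: it supplies a batch index $t$ and a set $Y \subseteq \Opt_t$ with $\cardin{\Lopt_t} \ge \cardin{Y} + \cardin{\BSet_t}$ and $\mrgY{Y}{\OM \setminus \Opt_t} \ge \mrgC\Lopt_t + \tfrac{\eps^2}{8k}\profitX{\OM}$.

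The candidate exchange is $\CS' = \Lopt - \Lopt_t + \BSet_t + Y$: drop the local centers lying in batch $t$, and insert in their place the separator $\BSet_t$ together with $Y$. The size bound in \lemref{shrink} gives $\cardin{\CS'} \le \cardin{\Lopt} = k$ (if it is strictly smaller, pad with arbitrary points, which cannot decrease the service), and the exchange touches $\cardin{\Lopt_t} + \cardin{\BSet_t} + \cardin{Y} \le 2\cardin{\Lopt_t} = O(\bSize) \le \xs$ centers. Hence $\CS'$ is among the exchanges the algorithm inspects: it considers dropping the subset $\Lopt_t$ of the current centers and then applies \lemref{aprox:marginal:set} (with $t=\xs$, at a precision polynomially small in $\eps/k$, which does not change the stated running time) to add new centers; this finds an exchange at least as good as $\CS'$ up to a $(1-o(1))$ factor, and the $\tfrac{\eps^2}{8k}\profitX{\OM}$ gain below has enough slack to absorb that loss. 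Invoking \remref{shortcut}, it therefore suffices to prove the strict inequality $\profitX{\CS'} > \profitX{\Lopt}$.

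The core computation parallels the ``unlikely scenario'' paragraph. Write $\CS' = (\Lopt - \Lopt_t + \BSet_t) + Y$, so that $\profitX{\CS'} = \profitX{\Lopt - \Lopt_t + \BSet_t} + \mrgY{Y}{\Lopt - \Lopt_t + \BSet_t}$. Now apply the separation property of $\BSet_t$: since $\Lopt_t, Y \subseteq \batch_t$ and $\BSet_t$ Voronoi-separates $\batch_t$ from the rest, once a center set already contains $\BSet_t$, adding or deleting a subset of $\batch_t$ only changes the service of the points delimited by $\BSet_t$, whose only competing centers are those of $\BSet_t$ itself; hence $\mrgY{\Lopt_t}{\Lopt - \Lopt_t + \BSet_t} = \mrgC\Lopt_t$ (by the definition in \Eqref{contrib}) and $\mrgY{Y}{\Lopt - \Lopt_t + \BSet_t} = \mrgY{Y}{\OM \setminus \Opt_t}$. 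Combining this with monotonicity, $\profitX{\Lopt + \BSet_t} \ge \profitX{\Lopt}$, and \lemref{shrink},
\[
\profitX{\CS'} = \profitX{\Lopt + \BSet_t} - \mrgC\Lopt_t + \mrgY{Y}{\OM \setminus \Opt_t} \ge \profitX{\Lopt} + \tfrac{\eps^2}{8k}\,\profitX{\OM}.
\]
Since $\Opt \subseteq \OM$ gives $\profitX{\OM} \ge \profitX{\Opt} > \profitX{\Lopt}$ under the contradiction hypothesis, the right-hand side is at least $\pth{1 + \tfrac{\eps^2}{8k}}\profitX{\Lopt}$, which exceeds the $\pth{1 + \eps^2/(16k)}$ improvement threshold; so the algorithm could not have terminated. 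This contradiction gives $\profitX{\Lopt} \ge (1-\eps)\profitX{\Opt} = (1-\eps)\optY{k}{\PS}$, and since the algorithm performs only $O(k/\eps^2)$ exchanges (as in the running-time analysis) and returns $\Lopt$, it is a $(1-\eps)$-approximation.

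The step I expect to be the main obstacle is making this use of separation fully rigorous: one must verify that inserting the ``far'' centers $\Lopt \setminus \Lopt_t$ and $\Opt \setminus \Opt_t$, and the other batches' separators $\BSet_j$ ($j \ne t$), leaves the marginal of a subset of $\batch_t$ unchanged whenever $\BSet_t$ is already present. This is the same manipulation the paper already relies on — implicitly when relating $\mrgC\Lopt_i, \mrgC\Opt_i$ to $\mrgY{\Lopt}{\BSet}, \mrgY{\Opt}{\BSet}$, and explicitly in the ``unlikely scenario'' — so it should go through, but a bare submodularity appeal points the relevant inequality the wrong way, so this is where care is needed rather than routine bookkeeping.
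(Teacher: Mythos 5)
Your proof is correct and takes essentially the same approach as the paper: invoke \lemref{shrink} under the contradiction hypothesis, form the exchange $\Lopt - \Lopt_t + \BSet_t + Y$, and show via the separation-based marginal identities that its gain of $(\eps^2/8k)\profitX{\OM}$ exceeds the $\eps^2/(16k)$ improvement threshold, contradicting termination. You simply make explicit the size bound and the marginal bookkeeping that the paper's terse proof (``arguing as above, i.e., the unlikely scenario'') leaves implicit, and your flag on the separation identity as the step requiring care rather than routine manipulation is accurate.
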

\begin{proof}
    If not, then, arguing as above, there must be a batch for which
    there is an exchange with profit at least
    $(\eps^2/4k) \profitX{\OM}$ (see \Eqref{good:exchange}). By
    \lemref{shrink}, we can shrink the optimal batch $\Opt_t$, such
    that the exchange becomes feasible, and is still profitable (the
    profit becomes $(\eps^2/8k) \profitX{\OM}$). But that is
    impossible, since by arguing as above (i.e., the unlikely
    scenario), we have that this swap would result in a better local
    solution, and the exchange is sufficiently small to have been
    considered.  Specifically, the local search algorithm uses
    \lemref{aprox:marginal:set}, say with $\eps=1/2$, ensures that the
    local search algorithm would find an exchange with half this
    value, and would take it.  A contradiction.
\end{proof}
 
\subsection{The result}

\begin{theorem}
    Let $\PS$ be a set of $n$ points in $\Re^d$, let $\eps \in (0,1)$
    be a parameter, let $\sFunc: \Re^+ \rightarrow \Re^+$ be a service
    function, and let $k\leq n$ be an integer parameter. One can compute, in
    $(n/\eps)^{O(d/\eps^{d+1})}$ time, a set of $k$ centers $\CS$,
    such that $\profitY{\CS}{\PS} \geq (1-\eps) \optY{k}{\PS}$, where
    $\optY{k}{\PS}$ denotes the optimal solution using $k$ centers.
\end{theorem}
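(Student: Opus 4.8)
The plan is to run the local-search scheme of \secref{local:search}
    -- a constant-factor starting solution followed by profitable
    size-$\xs$ exchanges -- and then bound its running time and prove
    its approximation ratio separately, using the tools of
    \secref{good:exchange} and \secref{correctness}. First compute an
    initial set of $k$ centers within a constant factor of
    $\optY{k}{\PS}$ via \lemref{const:approx}. Set $\bSize =
    O(1/\eps^{d})$ and $\xs = O(\bSize/\eps) = O(1/\eps^{d+1})$. Then
    iterate: given the current center set $\CS$ with service
    $\profitX{\CS}$, look for an exchange of size at most $\xs$ that
    multiplies the service by at least $1 + \eps^2/(16k)$; concretely,
    for each of the at most $k^{\xs}$ ways of dropping a subset of size
    at most $\xs$ from $\CS$, invoke \lemref{aprox:marginal:set} (with a
    fixed constant approximation parameter, say $1/2$) on the remaining
    centers to obtain an approximately-best same-size set $S$ of new
    centers to add back, and test the resulting center set against the
    threshold. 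If some exchange succeeds, perform it and repeat;
    otherwise stop and output $\CS$.

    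\textbf{Running time.} Since the starting solution is within a
    constant factor of $\optY{k}{\PS}$ and each accepted exchange
    multiplies the service by at least $1 + \eps^2/(16k)$, the number of
    exchanges is $O(k/\eps^2)$. Each iteration makes at most $k^{\xs}$
    calls to \lemref{aprox:marginal:set}, each running in
    $(n/\eps)^{O(d\xs)}$ time, so the total running time is $O(k/\eps^2)
    \cdot k^{\xs} \cdot (n/\eps)^{O(d\xs)} = (n/\eps)^{O(d/\eps^{d+1})}$,
    using $k \le n$ and $\xs = O(1/\eps^{d+1})$.

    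\textbf{Approximation ratio.} It remains to show that the output
    $\Lopt$ obeys $\profitX{\Lopt} \ge (1-\eps)\optY{k}{\PS}$; assume
    instead that $\profitX{\Lopt} < (1-\eps)\profitX{\Opt}$ for an
    optimal set $\Opt$. Using \thmref{div:balanced}, compute a balanced
    $\bSize$-division $\Div = \brc{(\batch_1,\BSet_1),\ldots,
    (\batch_m,\BSet_m)}$ of $\Lopt \cup \Opt$ with the $\pm 1$ coloring
    that marks $\Opt$ versus $\Lopt$ (writing $\Lopt_i = \batch_i \cap
    \Lopt$, $\Opt_i = \batch_i \cap \Opt$, $\BSet = \bigcup_i \BSet_i$,
    $\LM = \Lopt \cup \BSet$, $\OM = \Opt \cup \BSet$); by the balance
    and small-separator guarantees, inside each batch $\batch_i$ the
    numbers of optimal centers, local centers, and separator points are
    all comparable. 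By \lemref{b:not:profitable}, adjoining all
    separator centers costs at most a $(1+\eps/4)$ factor for either
    side, so $\profitX{\LM} < (1-\eps/2)\profitX{\OM}$. The
    Voronoi-separation property localizes the marginal contributions
    batch by batch (see \Eqref{contrib}), and averaging over the batches
    produces an index $t$ for which the optimal batch beats the local
    one by at least $(\eps^2/4k)\profitX{\OM}$ (see
    \Eqref{good:exchange}). Finally, \lemref{shrink} -- whose proof
    feeds the submodular slow-degradation bound of
    \lemref{slow:degradation} into the optimal batch -- trims $\Opt_t$
    to a subset $Y$ so that the exchange $\Lopt + \BSet_t - \Lopt_t + Y$
    has total size $O(\bSize) \le \xs$ and still improves the service by
    at least $(\eps^2/8k)\profitX{\OM} \ge (\eps^2/8k)\profitX{\Opt}$;
    even after the factor-$2$ loss of \lemref{aprox:marginal:set} this
    clears the $1 + \eps^2/(16k)$ threshold, contradicting the fact that
    the algorithm stopped. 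Hence $\Lopt$ is a $(1-\eps)$-approximation,
    and the theorem follows.

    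\textbf{Main obstacle.} The crux is the approximation-ratio
    argument, and within it the size bookkeeping that makes the
    improving exchange simultaneously profitable and of size at most
    $\xs$: this is exactly where the strengthened division of
    \thmref{div:balanced} (nearly equal batches, each with a small
    separator from the rest) and the calibration of $\bSize$, $\xs$, and
    $\delta = \eps/\constS$ come in.
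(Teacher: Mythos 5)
Your proposal is correct and follows the paper's own line of argument essentially verbatim: constant-factor start via \lemref{const:approx}, iterated size-$\xs$ exchanges discovered with \lemref{aprox:marginal:set}, the $O(k/\eps^2)\cdot k^{\xs}\cdot(n/\eps)^{O(d\xs)}$ running-time count, and the contradiction argument built on \thmref{div:balanced}, \lemref{b:not:profitable}, the averaging over batches, and \lemref{shrink}. The paper states the theorem as a summary of Sections~\ref{sec:good:exchange}--\ref{sec:correctness}, and your write-up is a faithful consolidation of that same proof.
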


\section{Discussion}

We presented an algorithm that runs in polynomial time for any
constant $\eps>0$ and any constant dimension $d$ and achieves a
$(1-\eps)$-approximation. %
The dependency on the dimension $d$ is doubly exponential, however. A
natural question is whether the dependency on the dimension $d$ in the
runtime can be improved. Perhaps by considering some special cases of
the problem, for more specific service function, such as
$\sFunc( \ell ) = \frac{1}{1+\ell}$ or
$\sFunc( \ell ) = \frac{1}{1+\ell^2}$ (i.e., the service quality drops
roughly linearly or quadraticly with the distance).

Our algorithm finds a subset $\CS \subseteq \Re^d$ of size $k$ that
approximately maximizes the objective function. A close variant of the
problem asks to find a subset $\CS \subseteq \Re^d$ of size $k$ that
has an additional constraint that $\CS \subseteq P$. Can we obtain an
algorithm for this problem with the same asymptotic runtime for $d>2$?
The main difficulty is that we would need a variant of
\thmref{div:balanced} with an additional property
$\BSet_i \subseteq P$ for all $i$ but such a division does not exist
even for $d=3$. (For $d=2$ using planar graph divisions on the Voronoi
diagram directly implies the desired result.)

Finally, one can ask a similar question about clustering in
graphs. Specifically, given a graph on $n$ vertices $P$, we would like
to select $k$ vertices $C$ that approximately maximizes
$\sum_{\pa \in \PS}\min_{c \in C}\sFunc\bigl( \dSY{\pa}{c} \bigr)$,
where $\dSY{\pa}{c}$ is the shortest-path distance from $p$ to
$c$. Can we achieve a polynomial time algorithm for arbitrary small
constant $\eps>0$ if the graph is planar or come from some other class
of graphs?

%

%
%

%

%

%*flatex input: [./cluster_cover.bbl]
\newcommand{\etalchar}[1]{$^{#1}$}
 \providecommand{\CNFX}[1]{ {\em{\textrm{(#1)}}}}

% flatex input end: [./cluster_cover.bbl]

%
%
%

\appendix

\section{Balanced Voronoi division}
\apndlab{balanced}

We need the following variant of a result of Bhattiprolu and Har-Peled
\cite{bh-svdls-16}.

\begin{theorem}
    \thmlab{separator:main}%
    Let $\PS$ be a set of $n'$ points in $\Re^d$, where every point
    has a positive integer weight, such that the total weight of the
    points is $n$, and let $\bSize$ be parameter. Furthermore, assume
    that no point has weight that exceeds $\bSize$. Then, one can
    compute, in expected $O(n)$ time, a ball $\ball$, and a set
    $\SepSet$ that lies on the boundary of $\ball$, such that
    \smallskip%
    \begin{compactenumi}
        \item $\cardin{\SepSet} \leq \constA \bSize^{1-1/d}$,
        \item the total weight of the points of $\PS$ inside $\ball$
        is at least $\bSize$ and at most $ \constB \bSize$,
                
        \item $\SepSet$ is a Voronoi separator of the points of $\PS$
        inside $\ball$ from the points of $\PS$ outside $\ball$.
    \end{compactenumi}
    \smallskip%
    Here $\constA, \constB > 0$ are constants that depends only on the
    dimension $d$.
\end{theorem}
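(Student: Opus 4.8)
The plan is to build on the Voronoi-separator construction of Bhattiprolu and Har-Peled~\cite{bh-svdls-16}, adding two ingredients on top of it: handling the integer weights on the points, and tuning the ball so that it captures weight $\Theta(\bSize)$ rather than a constant fraction of the total weight. I would first dispose of the weights. The Voronoi diagram in question is that of the $n'$ distinct sites of $\PS$, and it --- as well as the number of its cells that a given sphere crosses --- does not depend on the weights at all; the weights enter only through the balance requirement~(ii), which is a statement about the \emph{weighted} measure of $\ball\cap\PS$. The construction of~\cite{bh-svdls-16}, and its averaging/packing arguments, go through verbatim when ``balance'' is read with respect to this weighted measure instead of with respect to cardinality. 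Hence it suffices to produce a ball $\ball$ whose bounding sphere crosses few cells of $\VorX{\PS}$ and which carries weight $\Theta(\bSize)$, and then to realize a separating net on $\partial\ball$ whose size is proportional to that crossing number.

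To force the enclosed weight into the window $[\bSize,\constB\bSize]$, I would pick a site $\pb$ at random with probability proportional to its weight and look at the concentric balls $\ball(\pb,r)$, $r\ge 0$. The enclosed weight $w(r)=w\bigl(\PS\cap\ball(\pb,r)\bigr)$ is non-decreasing, starts at $w_{\pb}\le\bSize$, and (after a symbolic perturbation breaking ties among inter-site distances) increases in jumps of size at most $\bSize$, since no single site has weight exceeding $\bSize$; consequently the set $I=\{\,r : \bSize\le w(r)\le\constB\bSize\,\}$ is a non-empty interval for any $\constB\ge 2$, and every radius in $I$ already satisfies~(ii). What remains is to choose $r\in I$ so that $\partial\ball(\pb,r)$ crosses only $O(\bSize^{1-1/d})$ Voronoi cells.

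This last point is the crux. The relevant observation is that every site whose cell can meet $\partial\ball(\pb,r)$ for some $r\in I$ lies inside $\ball(\pb,\sup I)$, a ball of weight at most $\constB\bSize$, so the subproblem that controls the separator already has ``size'' $\Theta(\bSize)$. On top of that I would invoke the engine of~\cite{bh-svdls-16}: using that each convex Voronoi cell is crossed by $\partial\ball(\pb,r)$ only for $r$ in a single sub-interval, a volume/packing argument (equivalently, the local-search exchange argument of that paper, now restricted to radii in $I$) produces a radius $r^\star\in I$ whose sphere crosses at most $c_d\,\bSize^{1-1/d}$ cells. Finally, placing on $\partial\ball(\pb,r^\star)$ a net $\SepSet$ of $O(\bSize^{1-1/d})$ points, fine relative to those crossed cells, makes the sites of $\PS$ strictly inside $\ball(\pb,r^\star)$ Voronoi-separated from those strictly outside in $\VorX{\PS\cup\SepSet}$; this yields~(i) and~(iii). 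For the running time, as in~\cite{bh-svdls-16} none of these steps needs an explicit Voronoi diagram, a random $\pb$ (with a constant number of retries if needed) succeeds with constant probability, and all computations localize to $\ball(\pb,\sup I)$, giving expected $O(n)$ time.

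The main obstacle is exactly that third step. In the worst case a single sphere can cross the cells of \emph{all} the sites (take the sites on a common sphere), so the $O(\bSize^{1-1/d})$ bound is a genuine existence statement that relies on the convexity of Voronoi cells together with a delicate charging argument; re-deriving it in the weighted setting while keeping the chosen radius inside the admissible window $I$ --- i.e.\ reconciling the ``small crossing number'' requirement with the ``correct enclosed weight'' requirement --- is where the real work lies. The remaining pieces (eliminating the weights, growing the ball to the right size, and converting a low-crossing sphere into a small separating point-net) are routine.
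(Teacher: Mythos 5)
The paper does not actually prove \thmref{separator:main}: it states it as a ``variant of a result of Bhattiprolu and Har-Peled \cite{bh-svdls-16}'' and uses it as a black box in \apndref{balanced}, so there is no in-paper proof to compare against. Your plan is therefore being judged on its own merits, as a proposed derivation of the cited variant from \cite{bh-svdls-16}.

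The high-level plan is sensible --- the Voronoi diagram and cell-crossing structure of a sphere depend only on the site locations, so weights should indeed enter only through the balance constraint, and the monotone-weight / bounded-jump argument correctly shows the window $I=\{r:\bSize\le w(r)\le \constB\bSize\}$ is a non-empty interval once $\constB\ge 2$. You also correctly flag the crux: producing a radius in $I$ whose sphere has crossing number $O(\bSize^{1-1/d})$. But two of the supporting claims you use to argue that the \cite{bh-svdls-16} machinery ``localizes'' to this window are not correct as stated, and they are load-bearing. First, a site $s$ whose Voronoi cell meets $\partial\ball(\pb,r)$ satisfies $\dY{\pb}{s}\le 2r$, not $\dY{\pb}{s}\le r$ (if $x$ is the witness point, $\dY{x}{s}\le\dY{x}{\pb}=r$, hence $\dY{\pb}{s}\le 2r$); so the relevant sites live in $\ball(\pb,2\sup I)$. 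Second, and more seriously, there is no control on the weight of $\ball(\pb,2\sup I)$ --- it is only $\ball(\pb,\sup I)$ that carries weight at most $\constB\bSize$, and doubling the radius can pick up arbitrarily much weight (and arbitrarily many sites). So the assertion that ``the subproblem that controls the separator already has size $\Theta(\bSize)$'' does not follow, and the reduction to a $\Theta(\bSize)$-sized instance of \cite{bh-svdls-16} is not established.

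Beyond that, the restriction to the concentric family of spheres around a single weighted-random site $\pb$ is itself an assumption that needs justification: the local-search/averaging engine of \cite{bh-svdls-16} is not a statement about concentric spheres around a fixed center, and for a fixed center it is not clear that \emph{any} radius in the admissible window has crossing number $O(\bSize^{1-1/d})$. You acknowledge this reconciliation as ``where the real work lies,'' but the proposal does not actually do that work --- it waves at the packing/exchange argument of \cite{bh-svdls-16} without showing it survives both the weight window constraint and the concentric restriction. To close this gap you would either need to show that \cite{bh-svdls-16} already proves (or implies, after a short argument) a parametrized version of their separator theorem where the number of points inside the ball can be prescribed to be $\Theta(\bSize)$ for an arbitrary $\bSize\le n$, or re-run their local search over a family of balls rich enough to decouple the enclosed-weight constraint from the crossing-number bound. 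As written, the proof is a plausible outline with a genuine missing step, not a complete argument.
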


The points of the separator $\SepSet$ are guards.

\subsection{A division using the above separator}

\subsubsection{Algorithm}

We start with a set $\PS$ of $n$ points, and a parameter $\bSize$.
The idea is to repeatedly extract a set of weight (roughly) $\bSize$
from the point set, separate it, remove it, and put the set of guards
associated with it back into the set.

To this end, let $\bSize$ be a parameter, such that
\begin{equation*}
    \constA \bSize^{1-1/d} < \bSize/8%
    \iff%
    8\constA  < \bSize^{1/d}%
    \iff%
    \bSize > (8\constA)^d,
\end{equation*}
where $\constA$ is the constant from \thmref{separator:main}.

For an unweighted set of points $X$ and a real number $\tau >0$, let
$\tau*X$ denote the set of points, where every points has weight
$\tau$.

The algorithm for constructing the division is the following:

\begin{compactenumN}
    \item $\PS_0 \leftarrow \PS$. Initially all the points in $\PS_0$
    have weight $1$.
    \item $i \leftarrow 1$.

    \item While $\PS_{i-1}$ has total weight larger than $\bSize$ do:
    \begin{compactenumN}
        \item $(\ball_i,\SepSet_i) \leftarrow $ ball and separator
        computed by \thmref{separator:main} for $\PS_{i-1}$ with
        parameter $\bSize$.
        
        \item $\inSet_i \leftarrow \PS_{i-1} \cap \ball_i$. %
        \hfill\CC{// All points inside ball to be removed}

        \item $\GS_i \leftarrow \inSet_i \setminus \PS$. \hfill \CC{//
           The old guards in the ball}

        \item $\batch_i = \PS \cap \inSet_i$ \hfill\CC{// The batch of
           original points}

        \item
        $\PS_i = \pth{\PS_{i-1} \setminus \ball_i} \cup (\tau_i *
        \SepSet_i)$, where
        $\tau_i = \ceil{(\bSize/4)/ \cardin{\SepSet_i}}$.

        \item $\BSet_i = \SepSet_i \cup \GS_i $ \hfill \CC{// The set
           of guards for the batch $\batch_i$}
        
        \item $i \leftarrow i+1$.
    \end{compactenumN}

    \item $m \leftarrow i$
    \item $\batch_m = \PS_{m-1} \cap \PS$, and
    $\BSet_m = \PS_{m-1} \setminus \batch_m$.

    \item Return
    $\Div = \brc{ (\batch_1, \BSet_1), \ldots, (\batch_m,\BSet_m)}$.
\end{compactenumN}

\subsubsection{Analysis}

\begin{lemma}
    Consider the Voronoi diagram of $\VorX{\PS \cup \BSet_i}$. There
    is no common boundary in this Voronoi diagram between a cell of a
    point of $\batch_i$ and a cell of a point of
    $\PS \setminus \batch_i$.
\end{lemma}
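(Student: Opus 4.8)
The plan is to argue by contradiction. Suppose the cells of some $\pa \in \batch_i$ and some $\pb \in \PS \setminus \batch_i$ are adjacent in $\VorX{\PS \cup \BSet_i}$, and let $x$ be a point common to the closures of these two cells, so that $\dY{x}{\pa} = \dY{x}{\pb} \le \dY{x}{z}$ for every $z \in \PS \cup \BSet_i$. The first reduction I would make is to dispose of the case $\pb \notin \ball_i$. Here I use that $\pa \in \inSet_i$, that $\inSet_i \cup \SepSet_i \subseteq \PS \cup \BSet_i$, and that $\SepSet_i$ lies on $\partial \ball_i$: the separating guarantee of \thmref{separator:main} confines $\VorCell{\pa}{\PS \cup \BSet_i}$ to $\ball_i$ and forbids it from touching the cell of any point strictly outside $\ball_i$. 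Hence $\pb$ lies inside $\ball_i$, and so does $x$. Now $\pb$ is an original point lying inside $\ball_i$ that was not kept in $\batch_i$, so it must have been extracted in an earlier round $j < i$; thus $\pb \in \batch_j$, and in particular $\pb$ lies strictly inside $\ball_j$. Since $\pa$, being in $\batch_i$, was still present in $\PS_{i-1}$, it survived round $j$, so $\pa$ lies outside $\ball_j$; thus $\ball_j$ separates $\pb$ from $\pa$.

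The core is then a convexity/walk argument. Voronoi cells are convex, so the segments $\overline{\pb x}$ and $\overline{x \pa}$ lie in $\overline{\VorCell{\pb}{\PS \cup \BSet_i}}$ and $\overline{\VorCell{\pa}{\PS \cup \BSet_i}}$ respectively, and, since $\pb$, $x$, $\pa$ all lie in the convex body $\ball_i$, the whole polyline $\pb \to x \to \pa$ stays inside $\ball_i$. Walking along it from $\pb$ (inside $\ball_j$) to $\pa$ (outside $\ball_j$), I pick a point $z$ where it crosses $\partial \ball_j$; then $z \in \partial \ball_j \cap \ball_i$ and, by general position, $z$ lies in the interior of the cell of $\pb$ or of $\pa$ in $\VorX{\PS \cup \BSet_i}$. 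But when $\batch_j$ was removed, the separator $\SepSet_j$, sitting on $\partial \ball_j$, sealed $\ball_j$, so the point $z \in \partial \ball_j$ is at least as close to some guard $s \in \SepSet_j$ as it is to the site whose open cell contains it. As $z$ is interior to $\ball_i$, the guards of $\SepSet_j$ near $z$ lie inside $\ball_i$; such a guard, if still present in $\PS_{i-1}$, is by construction an element of $\GS_i \subseteq \BSet_i$, and if it was itself absorbed in a round between $j$ and $i$ then it has been re-shielded there -- this is where an induction on the round index enters. Either way one produces a point of $\PS \cup \BSet_i$ no farther from $z$ than is the site of the cell containing $z$, contradicting the interiority of $z$ in that cell.

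The step I expect to be the main obstacle is making this last point precise: transporting the separating guarantee from the round in which a point (or guard) was removed to the current round $i$. What \thmref{separator:main} provides directly concerns the Voronoi diagram of $\PS_{i-1}$ (with $\SepSet_i$ adjoined), while the lemma is about $\PS \cup \BSet_i$; the two sets differ in that $\PS \cup \BSet_i$ re-inserts the original points deleted in earlier rounds and omits the old guards lying outside $\ball_i$. The omission is harmless, since those guards are all strictly outside $\ball_i$ and cannot bring an $\inSet_i$-cell into contact with $\partial \ball_i$. The re-insertion is the crux: the re-inserted points are exactly the ``extra'' points of $\PS \cup \BSet_i$ lying strictly inside $\ball_i$, so to know both that $\SepSet_i$ still seals $\ball_i$ and that each such point stays separated from $\batch_i$, one needs that these points sit deep inside $\ball_i$ with their own shields in place -- a statement of the same flavor as the lemma. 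This forces an induction on the round index, the inductive hypothesis being that $\BSet_j$ Voronoi-separates $\batch_j$ from $\PS \setminus \batch_j$ for all earlier $j$; and the definition $\BSet_i = \SepSet_i \cup \GS_i$, which deliberately reinstates every old guard that fell inside $\ball_i$, is exactly what keeps the shield of each buried point available within $\PS \cup \BSet_i$. Making the invariant strong enough to survive the successive re-absorption of guards -- so that the recursion above terminates -- is the delicate part; the case $\pb \notin \ball_i$ and the reduction $x \in \ball_i$ are then routine applications of the separator property of $\SepSet_i$ (together, once more, with the inductive hypothesis for the re-inserted points).
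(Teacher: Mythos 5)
Your proposal is in the right spirit -- contradiction, reduce to the case that $\pb$ sits in an earlier batch $\batch_j$, then invoke the round-$j$ separator to manufacture a nearer guard, and recurse when that guard itself was later absorbed. But at the step you yourself single out as "the delicate part" you stop and flag the difficulty rather than resolve it, and this is exactly where the work lies. The paper does not set up an induction on the round index $i$ (with a strengthened invariant that survives guard re-absorption); instead it runs a \emph{direct guard chase}: starting from a point $\pa$ equidistant to $\cb \in \batch_i$ and $\cc \in \batch_j$ (with all other points of $\PS$ strictly farther), the round-$j$ separation produces a guard $\cc_j \in \SepSet_j$ strictly closer to $\pa$ than $\cb$; if $\cc_j \notin \BSet_i$, it was removed from the working set in some iteration $k$ with $j < k < i$, and the round-$k$ separation (applied to $\cc_j$ vs.\ $\cb$, case-splitting on whether $\pa \in \ball_k$) yields a guard $\cc_k \in \SepSet_k$ closer still. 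The iteration indices $j < k < \cdots$ strictly increase and are bounded by $i$, so the chase terminates at a guard that is alive in $\PS_{i-1}$, hence lies either in $\SepSet_i$ or in $\GS_i$, i.e.\ in $\BSet_i$. That monotone-index argument is what replaces your induction and is the missing piece.

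Two smaller issues. First, your opening reduction -- confining $\VorCell{\pa}{\PS \cup \BSet_i}$ to $\ball_i$ and thereby forcing $\pb \in \ball_i$ -- is stated as an application of \thmref{separator:main}, but that theorem concerns $\VorX{\PS_{i-1}\cup\SepSet_i}$, not $\VorX{\PS\cup\BSet_i}$; those two diagrams disagree both by re-inserted original points (deleted in earlier rounds) and by omitted old guards outside $\ball_i$, so the confinement is not a direct consequence and itself would need the same guard-chasing (or induction) you defer to later. The paper avoids this by not attempting that reduction at all: its case split is on whether $\cc$ is still in the working set $\PS_i$, not on $\cc$'s position relative to $\ball_i$. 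Second, the convexity/walk argument (tracing the polyline $\pb \to x \to \pa$ across $\partial \ball_j$ to find a crossing point $z$) is a detour with no counterpart in the paper; working directly with the bisector point, as the paper does, makes it unnecessary and removes the need for the general-position side condition you invoke at $z$.

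So: same family of ideas, genuinely different tactics (induction on round index plus a geometric crossing argument, versus a strictly-increasing guard chase), but the proposal as written leaves the central step -- termination of the recursion / strength of the invariant -- unproved where the paper closes it.
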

\begin{proof}
    Consider a point $\pa$ that is in equal distance to a point
    $\cb \in \batch_i$, and a point $\cc \in \PS \setminus \batch_i$,
    and furthermore, all other points of
    $\PS \setminus \brc{ \cb, \cc}$ are strictly further away from
    $\pa$.

    The claim is that
    $\distSetY{\pa}{\BSet_i} < \dY{\pa}{\cb} = \dY{\pa}{\cc}$. Namely,
    the region of common boundary between $\cb$ and $\cc$ in
    $\VorX{\PS}$ is completely covered by cells of $\BSet_i$ in
    $\VorX{\PS \cup \BSet_i}$.

    If $\cc \in \PS_i$, then $\SepSet_i$ separates (in the Voronoi
    interpretation) $\cb \in \batch_i \subseteq \inSet_i$ from all the
    points of $\PS_i \cap \PS \ni \cc$, which implies the claim.

    As such, it must be that $\cc \in \batch_j$, for some $j <
    i$. Namely, there is a guard $\cc_j \in \SepSet_j$, that separates
    $\cc$ from $\cb$, and its cell contains $\pa$. That is
    $\dY{\pa}{\cc_j} < \dY{\pa}{\cb}$, and $\cc_j \in \PS_j$. If
    $\cc_j \in \BSet_i$ then the claim holds.

    Otherwise, we apply the same argument again, this time to $\cc_j$
    and $\cb$. Indeed, $\cc_j$ was removed (from $\PS_k$) in some
    iteration $k$, such that $j < k <i$. Namely $\cc_j \in \ball_k$,
    and $\cb \notin \ball_k$. The point $\pa$ is closer to $\cc_j$
    then to $\cb$. If $\pa \in \ball_k$ then there is a guard
    $\cc_k \in\SepSet_k$ that is closer to $\pa$ than $\cb$, by the
    separation property.  Otherwise, it is easy to verify that the
    Voronoi cells of the guards of $\SepSet_k$ in
    $\VorX{\PS_{k-1} \cup \SepSet_k}$ cover completely the portion of
    the Voronoi cells of points in $\PS_{k-1}\cap \ball_k$ outside
    $\ball_k$, in the Voronoi diagram $\VorX{\PS_{k-1}}$. This readily
    implies that there is a closer guard $\cc_k \in \SepSet_k$ to
    $\pa$ than $\cc_j$.  In either case, we continue the argument
    inductively on $(\cc_k,\cb)$.

    By finiteness, it follows that there must be a guard
    $\cc' \in \BSet_i$ that is closer to $\pa$ than $\cb$, which
    implies the claim.
\end{proof}

\begin{observation}
    (A) For all $i$, we have
    \begin{math}
        \tau_i%
        =%
        \ceil{\bigl.\frac{\bSize/4}{ \cardin{\SepSet_i}}}%
        \geq%
        \frac{\bSize/4}{ \cardin{\SepSet_i}}%
        \geq%
        \frac{\bSize/4}{\constA \bSize^{1-1/d}}%
        \geq%
        \frac{\bSize^{1/d}}{4\constA }.
    \end{math}

    (B) As such, for all $i$, $\inSet_i$ contains at most
    $\constB \bSize / \min_j \alpha_j = O( \bSize^{1-1/d} )$ points
    that are not in $\PS$. That is, we have
    $\cardin{\inSet_i \setminus \batch_i} =O(\bSize^{1-1/d})$.

    (C) It follows that
    $\cardin{\BSet_i} = \cardin{\inSet_i \setminus \batch_i} +
    \cardin{\SepSet_i} = O(\bSize^{1-1/d})$.
\end{observation}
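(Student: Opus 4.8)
The plan is to chain the three parts, each of which reduces to the guarantees of \thmref{separator:main} plus elementary arithmetic; no new machinery is needed.

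For part (A) I would start from the definition $\tau_i = \ceil{(\bSize/4)/\cardin{\SepSet_i}}$, which is immediately at least $(\bSize/4)/\cardin{\SepSet_i}$. The separator $\SepSet_i$ is the one produced when \thmref{separator:main} is invoked on $\PS_{i-1}$ with parameter $\bSize$, so property (i) of that theorem gives $\cardin{\SepSet_i} \leq \constA \bSize^{1-1/d}$; substituting and simplifying $\bSize/\bSize^{1-1/d} = \bSize^{1/d}$ yields the stated $\tau_i \geq \bSize^{1/d}/(4\constA)$. (The standing assumption $\bSize > (8\constA)^d$ from the set-up guarantees $\cardin{\SepSet_i} < \bSize/8$, so the construction is consistent, though this is not needed for the inequality itself.)

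For part (B) the key observation is to pin down which points of $\inSet_i = \PS_{i-1}\cap\ball_i$ fail to be original: since $\batch_i = \PS \cap \inSet_i$, we have $\inSet_i \setminus \batch_i = \inSet_i \setminus \PS = \GS_i$, i.e., exactly those points of $\PS_{i-1}$ inside $\ball_i$ that were inserted in some earlier iteration $j<i$ as reweighted guards. Every such point carries weight $\tau_j$ for some $j$, hence weight at least $\min_j \tau_j \geq \bSize^{1/d}/(4\constA)$ by part (A). On the other hand, property (ii) of \thmref{separator:main} bounds the total weight of $\PS_{i-1}$ inside $\ball_i$ by $\constB \bSize$. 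Dividing the total weight budget by the minimum per-point weight of the non-original points bounds their count: $\cardin{\inSet_i \setminus \batch_i} \leq \constB\bSize/\min_j \tau_j \leq 4\constA\constB\,\bSize^{1-1/d} = O(\bSize^{1-1/d})$.

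For part (C) I would simply combine the pieces: by construction $\BSet_i = \SepSet_i \cup \GS_i$, so $\cardin{\BSet_i} \leq \cardin{\SepSet_i} + \cardin{\GS_i} = \cardin{\SepSet_i} + \cardin{\inSet_i \setminus \batch_i}$, which by property (i) of \thmref{separator:main} and part (B) is at most $\constA\bSize^{1-1/d} + O(\bSize^{1-1/d}) = O(\bSize^{1-1/d})$. The only step that needs a moment's care — and hence the ``hard'' part, such as it is — is the bookkeeping in (B): recognizing that $\inSet_i \setminus \batch_i$ is precisely the set of previously created guards, each with strictly positive weight bounded below by (A), so that the weight bound $\constB\bSize$ coming out of the separator theorem can be converted into a cardinality bound. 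Everything else is substitution.
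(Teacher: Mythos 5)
Your proof is correct and follows the (implicit) reasoning behind the paper's observation, which is stated there without a separate proof body. In particular, identifying $\inSet_i \setminus \batch_i$ with the previously inserted guards $\GS_i$, lower-bounding each such guard's weight by $\min_j \tau_j \geq \bSize^{1/d}/(4\constA)$ via part (A), and dividing this into the $\constB\bSize$ weight budget from \thmref{separator:main}(ii) is precisely the intended chain for (B), with (A) and (C) being direct substitutions.
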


\begin{lemma}
    \lemlab{div:1}%
    Given a set $\PS$ of $n$ points in $\Re^d$, and a parameter
    $\bSize$, one can compute in polynomial time, a division
    $\Div = \brc{ (\batch_1, \BSet_1), \ldots, (\batch_m,\BSet_m)}$,
    such that the following holds: \smallskip%
    \begin{compactenumA}
        \item $\bigcup \batch_i = \PS$, and the clusters
        $\batch_1, \ldots, \batch_m$ are disjoint.
        
        \item $m = O(n/\bSize)$.
        
        \item For all $i$, we have the following properties:
        \begin{compactenumA}
            \item the set $\BSet_i$ separates $\batch_i$ from
            $\PS \setminus \batch_i$.
            
            \item \itemlab{C:ii} $\cardin{\batch_i} = O(\bSize)$.
            
            \item $\cardin{\BSet_i} = O(\bSize^{1-1/d})$.
        \end{compactenumA}

        \smallskip%
        \item For $\BSet = \bigcup_i \BSet_i$, we have that
        $\cardin{\BSet} = O( n/\bSize^{1/d})$.
    \end{compactenumA}%
    \smallskip%
    Furthermore, one can modify the above construction, so that
    \itemref{C:ii} is replaced by
    $\cardin{\batch_i} = \Theta(\bSize)$.
\end{lemma}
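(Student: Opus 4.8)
The algorithm, the Voronoi‑separation guarantee for a single batch (the lemma established above), and the per‑ball structural bounds (the observation above) are already available, so the plan is simply to assemble them into the four claimed properties and then to strengthen the batch‑size bound. Property (A) is immediate from the description: an original point leaves the working set exactly once — the first time it falls inside one of the balls $\ball_i$ — and is at that moment placed in $\batch_i$, while points never removed land in the final batch $\batch_m$; hence the $\batch_i$ partition $\PS$. Property (C)(i) is exactly the lemma proved above. For (C)(ii) and (C)(iii): by \thmref{separator:main} the total weight inside $\ball_i$ is at most $\constB \bSize$ and each original point has weight $1$, so $\cardin{\batch_i} \le \constB\bSize = O(\bSize)$; and by the observation above $\cardin{\BSet_i} = \cardin{\SepSet_i} + \cardin{\GS_i} = O(\bSize^{1-1/d})$.

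For (B) and (D), I would track the total weight $W_i$ of $\PS_i$, with $W_0 = n$. In iteration $i$ we delete the points of $\ball_i$, of total weight at least $\bSize$ by \thmref{separator:main}, and reinsert $\tau_i * \SepSet_i$, of total weight $\tau_i \cardin{\SepSet_i} < \bSize/4 + \cardin{\SepSet_i} \le \bSize/4 + \constA \bSize^{1-1/d} < 3\bSize/8$, where the last step is the standing assumption $\constA\bSize^{1-1/d} < \bSize/8$. Thus $W_i < W_{i-1} - \bSize/2$, and since the loop runs only while $W_{i-1} > \bSize$ we get $m = O(n/\bSize)$, i.e.\ (B); then $\cardin{\BSet} \le \sum_i \cardin{\BSet_i} = m \cdot O(\bSize^{1-1/d}) = O\pth{n/\bSize^{1/d}}$, i.e.\ (D). The running time is polynomial: at every moment the working set has $O(n) + \sum_j \cardin{\SepSet_j} = O(n) + O(n/\bSize^{1/d}) = O(n)$ points, no weight exceeds $\tau_i \le \ceil{\bSize/4} \le \bSize$ so \thmref{separator:main} is applicable and costs expected $O(n)$ per call, and there are $m = O(n/\bSize)$ iterations.

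For the ``furthermore'': the lower bound $\cardin{\batch_i} = \Omega(\bSize)$ need not hold as stated, since a ball can be filled almost entirely by reinserted guard points and contain few original ones, so the fix is to coalesce batches. The point that makes coalescing essentially free is monotonicity of Voronoi diagrams: inserting extra sites only shrinks cells, so if $\BSet_{j}$ separates $\batch_j$ from $\PS \setminus \batch_j$, then $\BSet_{j_1} \cup \cdots \cup \BSet_{j_s}$ still separates each $\batch_{j_\ell}$ from $\PS \setminus \batch_{j_\ell}$ in $\VorX{\PS \cup \BSet_{j_1} \cup \cdots \cup \BSet_{j_s}}$, and therefore separates $\batch_{j_1} \cup \cdots \cup \batch_{j_s}$ from its complement. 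Hence one may run the loop while accumulating batches into a ``super‑batch'' until it holds at least $\bSize$ original points, then emit it (with boundary the union of the absorbed boundaries) and reset; the last batch absorbed contributes at most $\constB\bSize$ original points, so every emitted super‑batch has between $\bSize$ and $(1+\constB)\bSize = \Theta(\bSize)$ original points, and properties (A), (B) (the number of pieces only drops), (C)(i) and (D) are unaffected.

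The delicate point — and the step I expect to be the real obstacle — is to perform this coalescing so that each super‑batch simultaneously keeps its boundary at $O(\bSize^{1-1/d})$ and the count stays $O(n/\bSize)$, since a naive accumulation of many guard‑heavy, original‑point‑poor batches could blow its boundary up to $\Theta(\text{(\# absorbed)} \cdot \bSize^{1-1/d})$. The plan is to set up a charging scheme: a batch with $\cardin{\batch_i}$ much below $\bSize$ must have consumed at least $\bSize - \cardin{\batch_i}$ units of reinserted guard weight in iteration $i$, and that weight can be charged back to the $O(\bSize^{1-1/d})$‑size separators $\SepSet_j$ that produced it; coalescing along the resulting ``guard‑provenance'' forest with a standard balanced tree‑partition, and capping pieces both by original‑point count ($\ge \bSize$) and by a boundary budget ($O(\bSize^{1-1/d})$), then yields the required bounds once one shows the two caps are mutually compatible — the weight‑conservation estimate used for (B) being the lever for that. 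Making this precise, in particular handling a single ball that swallows the separators of many earlier tiny balls, is where essentially all the remaining work lies.
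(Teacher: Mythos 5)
Your treatment of properties (A)--(D) is correct and is essentially the paper's argument: (A), (C)(i)--(iii) follow from the algorithm description, the separation lemma, and the per-ball observation, and your weight-tracking estimate (delete at least $\bSize$, reinsert strictly less than $\bSize/4 + \constA\bSize^{1-1/d} < 3\bSize/8$) gives the net drop of at least $\bSize/2$ per iteration that the paper also uses for (B), whence (D).

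The gap is in the ``furthermore'' part, and it is exactly the one you flag yourself. Your sequential-accumulation scheme (absorb consecutive batches until at least $\bSize$ original points, then emit) can in principle absorb $\omega(1)$ guard-heavy, point-poor batches into one super-batch, blowing its boundary past $O(\bSize^{1-1/d})$, and the ``guard-provenance forest'' charging scheme you sketch to repair this is left open in your write-up. The paper sidesteps the problem with a one-line counting argument that you did not find. Since the $m = O(n/\bSize)$ batches partition all $n$ points of $\PS$ and each holds $O(\bSize)$ of them, at least $\Omega(n/\bSize)$ of the batches each contain $\Omega(\bSize)$ points of $\PS$: writing $m \leq c_1 n/\bSize$, $\cardin{\batch_i} \leq c_2\bSize$, and letting $L$ be the batches with at least $\bSize/(2c_1)$ original points, one has $n \leq \cardin{L}\, c_2\bSize + m\cdot\bSize/(2c_1) \leq \cardin{L}\,c_2\bSize + n/2$, so $\cardin{L} \geq n/(2c_2\bSize)$. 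The remaining ``small'' batches number at most $m = O(n/\bSize)$, hence outnumber the large ones by only a constant factor, so one can assign each small batch to a large batch with only $O(1)$ small batches charged to any single large one. Each resulting merged batch then has $\Theta(\bSize)$ original points and a boundary that is a union of $O(1)$ sets each of size $O(\bSize^{1-1/d})$, so it is still $O(\bSize^{1-1/d})$; the batch count only drops, so (B) and (D) persist; and separation is preserved by precisely the Voronoi-monotonicity observation you already made. This renders your proposed charging scheme unnecessary.
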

\begin{proof}
    For the bound on number of clusters, observe that every iteration
    of the algorithm reduces the weight of the working set $\PS_i$ by
    at least $\bSize/2$ -- indeed, the weight of $\batch_i$ is at
    least $\bSize$, and the total weight of points of $\SepSet_i$
    (after multiplying their weight by $\tau_i$) is at most
    $\bSize/2$. Thus implying the claim.

    All the other claims are either proved above, or readily follows
    from the algorithm description.

    The modification of \itemref{C:ii} follows by observing that we
    can merge clusters, and there are $\Theta(n/\bSize)$ clusters with
    $\Omega(\bSize)$ points of $\PS$, by averaging. As such, one can
    merge $O(1)$ clusters that have $o(\bSize)$ points of $\PS$ into a
    cluster that has $\Omega(\bSize)$ points of $\PS$, thus implying
    the modified claim.
\end{proof}

\bigskip%
\RestatementOf{\thmref{div:balanced}}{\ThmDivBalancedStatement{} }

\begin{proof}
    We compute a division of $\PS$ using \lemref{div:1}, with
    parameter $\bSize'= O(\delta \bSize) = \Omega(1/\delta^d)$, such
    that (i) the maximum size of a batch is strictly smaller than
    $\delta\bSize/2$, and (ii)
    $\constD (\bSize')^{1-1/d} < \delta \bSize'$, where $\constD$ is
    some prespecified constant. Let
    $\Div' = \brc{ (\batch_1', \BSet_1'), \ldots,
       (\batch_\tau',\BSet_\tau')}$, be the resulting division. Let
    $\balance_i = \chi(\BSet_i')$, and observe that
    $\sum_i \balance_i = \chi(\PS) = 0$.  There is a permutation $\pi$
    of the batches, such that for any prefix $j$, we have
    $|\sum_{i=1}^j \balance_{\pi(i)}| \leq \max_i \cardin{\batch_i}
    \leq \delta\bSize/2 = D$.  This follows readily by reordering the
    summation, such that one adds batches with positive (resp.,
    negative) balance if the current prefix sum is negative (resp.,
    positive), and repeating this till all the terms are
    used\footnote{This is the same idea that is used in the Riemann's
       rearrangement theorem.}.

    We break the permutation $\pi$ into minimum number of consecutive
    intervals, such that total size of batches in each interval is at
    least $(1-\delta)\bSize$. Merging the last two interval if needed
    to comply with the desired property. The batch formed by a union
    of an interval can have discrepancy at most
    $2D = 2(\delta\bSize/2) = \delta \bSize$, as desired.

    All the other properties follows readily by observing that merging
    batches, results in valid batches, as far as separation.
\end{proof}

\remove{%
   \subsection{Planar graphs}

\begin{lemma}
    \lemlab{graph_div:1}%
    Given a planar graph of $n$ vertices, and a parameter $\bSize$,
    one can compute in $O(n\log n)$ time, a division
    $\Div = \brc{ (\batch_1, \BSet_1), \ldots, (\batch_m,\BSet_m)}$,
    such that the following holds: \smallskip%
    \begin{compactenumA}
        \item $\bigcup \batch_i = \PS$, and the clusters
        $\batch_1, \ldots, \batch_m$ are disjoint.
        $\BSet_i \subseteq P$ for $i=1, \ldots, m$.
        
        \item $m = O(n/\bSize)$.
        
        \item For all $i$, we have the following properties:
        \begin{compactenumA}
            \item the set $\BSet_i$ separates $\batch_i$ from
            $\PS \setminus \batch_i$.
            
            \item \itemlab{C:ii:u} $\cardin{\batch_i} = O(\bSize)$.
            
            \item $\cardin{\BSet_i} = O(\sqrt{\bSize})$.
        \end{compactenumA}

        \smallskip%
        \item For $\BSet = \bigcup_i \BSet_i$, we have that
        $\cardin{\BSet} = O( n/\sqrt{\bSize})$.
    \end{compactenumA}%
    \smallskip%
    Furthermore, one can modify the above construction, so that
    \itemref{C:ii} is replaced by
    $\cardin{\batch_i} = \Theta(\bSize)$.
\end{lemma}
\begin{proof}
    Follows from Lemma 2 in \cite{federickson1987fast}.
\end{proof}

Using \lemref{graph_div:1} we can prove the following analog of
\thmref{div:balanced} for planar graphs.

\begin{lemma}
    \lemlab{graph_div:balanced}%
    Given a planar graph of $n$ vertices, and parameters $\delta$ and
    $\bSize = \Omega(1/\delta^{3})$, and a balanced coloring $\chi$ of
    $\PS$, one can compute in $O(n \log n)$ time, a division
    $\Div = \brc{ (\batch_1, \BSet_1), \ldots, (\batch_m,\BSet_m)}$,
    such that the following holds: \smallskip%
    \begin{compactenumA}
        \item $\bigcup \batch_i = \PS$, and the clusters
        $\batch_1, \ldots, \batch_m$ are disjoint.
        $\BSet_i \subseteq P$ for $i=1, \ldots, m$.
        
        \item $m = O( n/\bSize)$.
        
        \item For all $i$, we have the following properties:
        \begin{compactenumA}
            \item the set $\BSet_i$ separates $\batch_i$ from
            $\PS \setminus \batch_i$.
            
            \item \itemlab{C:ii:2:x}
            $(1-\delta)\bSize \leq \cardin{\batch_i} \leq \bSize$
            (except for the last batch, which might be of size at
            least $(1-\delta)\bSize$, and at most size $2\bSize$).

            \item $\cardin{\BSet_i} \leq \delta \cardin{\batch_i}$.

            \item $|\chi(\batch_i)| \leq \delta \cardin{\batch_i}$.
        \end{compactenumA}

    \end{compactenumA}%
\end{lemma}

}

\end{document}